\newclass{\BCH}{BCH}
\newclass{\Enc}{Enc}
\newclass{\Dec}{Dec}
\newclass{\mem}{mem}
\newclass{\val}{val}
\newclass{\bin}{bin}
\title{Fully Characterizing Lossy Catalytic Computation}
\author{Marten Folkertsma\thanks{Supported by the Dutch Ministry of Economic Affairs and Climate Policy (EZK), as part of the Quantum Delta NL program.} \\ CWI \& Qusoft \\ \texttt{mjf@cwi.nl} \and Ian Mertz\thanks{Supported by Royal Society University Research Fellowship URF\textbackslash R1\textbackslash 191059.} \\ University of Warwick \\ \texttt{ian.mertz@warwick.ac.uk} \and Florian Speelman\thanks{Supported by the Dutch Ministry of Economic Affairs and Climate Policy (EZK), as part of the Quantum Delta NL program, and the project Divide and Quantum `D\&Q' NWA.1389.20.241 of the program `NWA-ORC', which is partly funded by the Dutch Research Council (NWO).} \\ University of Amsterdam \\ \texttt{f.speelman@uva.nl} \and Quinten Tupker\thanks{Supported by the Dutch National Growth Fund (NGF), as part of the Quantum Delta NL program.} \\ CWI \\ \texttt{qmt@cwi.nl}}
\date{\today}
\begin{document}

\maketitle

\begin{abstract}
A \textit{catalytic machine} is a model of computation where a traditional space-bounded machine is augmented with an additional, significantly larger, ``catalytic'' tape, which, while being available as a work tape, has the caveat of being initialized with an arbitrary string, which must be preserved at the end of the computation. Despite this restriction, catalytic machines have been shown to have surprising additional power; a logspace machine with a polynomial length catalytic tape, known as \textit{catalytic logspace} ($\CL$), can compute problems which are believed to be impossible for $\L$.

A fundamental question of the model is whether the catalytic condition, of leaving the catalytic tape in its exact original configuration, is robust to minor deviations. This study was initialized by Gupta et al. (2024), who defined \textit{lossy catalytic logspace} ($\LCL[e]$) as a variant of $\CL$ where we allow up to $e$ errors when resetting the catalytic tape. They showed that $\LCL[e] = \CL$ for any $e = O(1)$, which remains the frontier of our understanding.

In this work we completely characterize lossy catalytic space ($\LCSPACE[s,c,e]$) in terms of ordinary catalytic space ($\CSPACE[s,c]$). We show that
$$\LCSPACE[s,c,e] = \CSPACE[\Theta(s + e \log c), \Theta(c)]$$
In other words, allowing $e$ errors on a catalytic tape of length $c$ is equivalent, up to a constant stretch, to an equivalent errorless catalytic machine with an additional $e \log c$ bits of ordinary working memory.

As a consequence, we show that for any $e$, $\LCL[e] = \CL$ implies $\SPACE[e \log n] \subseteq \ZPP$, thus giving a barrier to any improvement beyond $\LCL[O(1)] = \CL$. We also show equivalent results for \textit{non-deterministic} and \textit{randomized catalytic space}.
\end{abstract}

\clearpage

% \tableofcontents

%-------------------------

\section{Introduction} \label{sec:intro}

\subsection{Catalytic computation}
Within space-bounded computation, the \textit{catalytic computing} framework,
first introduced by Buhrman, Cleve, Kouck{\'{y}}, Loff, and 
Speelman~\cite{BuhrmanCleveKouckyLoffSpeelman14},
models the question of whether or not full memory can be a computational
resource. Their main object of study is a \textit{catalytic logspace} ($\CL$)
machine, in which a traditional logspace-bounded Turing machine is given
access to a second work tape, polynomial in length, called the catalytic tape;
while this tape is exponentially longer than the logspace work tape, it
is already full with some string $\tau$ at the outset, and this string
$\tau$ must be preserved by the overall computation.

Surprisingly, \cite{BuhrmanCleveKouckyLoffSpeelman14} show that $\CL$
can be much more powerful than $\L$, with the catalytic tape
being at least as powerful a resource as non-determinism
($\NL \subseteq \CL$), randomness ($\BPL \subseteq \CL$), and more
($\TCo \subseteq \CL$). They also showed that its power is nevertheless
limited and falls far short $\PSPACE$, namely $\CL \subseteq \ZPP$.

This work spawned a long sequence of explorations of the
power of catalytic space. Given the base model of $\CL$
there are many possible variations and structural questions
to be asked, such as the power of
randomness~\cite{DattaGuptaJainSharmaTewari20,CookLiMertzPyne24},
non-determinism~\cite{BuhrmanKouckyLoffSpeelman18},
non-uniformity~\cite{Potechin17,RobereZuiddam21,CookMertz22,CookMertz24},
and other variants~\cite{GuptaJainSharmaTewari19,BisoyiDineshSarma22}.
There have also been many works connecting
% \florian{Can we say something stronger than connecting? E.g., that catalytic techniques solved something.} \ian{See how you like it now.}
the catalytic framework to broader questions in complexity theory, such as
space-bounded derandomization~\cite{Pyne23,DoronPyneTell24,LiPyneTell24},
as well as adaptations of catalytic techniques to solve
longstanding open questions such as
compositional upper bounds for space~\cite{CookMertz20,CookMertz21,CookMertz24}
(see \cite{Koucky16,Mertz23} for surveys on the topic).

\subsection{Lossy catalytic computation}

Besides these more standard structural questions, there are also
catalytic variants which are more specific to the catalytic space
restriction. In particular, Gupta et al.~\cite{GuptaJainSharmaTewari24}
initiated the study of \textit{lossy} catalytic computing, wherein
the catalytic tape need not be exactly reset to its initial configuration.
This model, which we refer to as $\LCSPACE$, essentially asks how
robust the core definition of catalytic space is to seemingly small relaxations;
for example, in the \textit{quantum} % catalytic 
setting %~\cite{BuhrmanFolkertsmaMertzSpeelmanStrelchukSubramanianTupker24},
some computation error (albeit of a different form) is necessary for
converting between different definitions based on 
allowed operations. % and initial catalytic tapes.
% \ian{I referenced the future quantum work here, but we can take it out
% for the first draft if that paper is in flux. Always can add it back
% for the camera-ready version when it's accepted somewhere.}

To begin, note that $\CL$ with $e \leq \poly(n)$ errors trivially contains 
the class $\SPACE[e]$ by simply erasing the first $e$ bits of the
catalytic tape and using them as free memory. Because
we have not managed to prove that any space-bounded class beyond
$\L$ which is contained in $\ZPP$, we should not expect
to be able to prove $\CL$ is the same as $\CL$ with $e = \omega(\log n)$ errors.
The question, then, is to understand where, in the range of
$e = 0$ to $e = O(\log n)$, is the acceptable number of errors
that $\CL$ can provably tolerate.

As an initial answer to the previous question, \cite{GuptaJainSharmaTewari24}
show that $\CL$ gains no additional power from allowing any constant
number of errors on the catalytic tape, i.e., $\LCL[O(1)] = \CL$.
This remains the frontier of our knowledge, and Mertz~\cite{Mertz23}
posed it as an open question to improve this result to any
superconstant number of errors, or, alternatively, to provide
evidence against being able to prove such a collapse.\footnote{We
cannot expect an unconditional separation between $\CL$ and any $\LCL$,
as even separating $\PSPACE$ from e.g.\ $\TCo (\subseteq \CL)$ remains wide open.}
Recently, Cook et al.~\cite{CookLiMertzPyne24} showed that a different
error-prone model, namely \textit{randomized} $\CL$, is no more powerful
than the base $\CL$ model, indicating that perhaps such an improvement
is possible.

\subsection{Our results}
In this work we completely characterize lossy catalytic space
in terms of ordinary catalytic space.
% in doing so we undermine the belief
% that $\CL$ is provably robust under allowing a superconstant number of errors.
Let $\CSPACE[s,c]$ denote catalytic machines with free space $s$ and
catalytic space $c$, and let $\LCSPACE[s,c,e]$ be the same with
up to $e$ errors allowed in resetting the catalytic tape. We show
that these $e$ errors are equivalent to an additional $e \log c$
free bits of memory, up to constant factor losses.

\begin{theorem} \label{thm:main}
Let $s := s(n), c := c(n), e := e(n)$ be such that
$e \leq c^{1-\Omega(1)}$. Then
$$\LCSPACE[s,c,e] = \CSPACE[\Theta(s + e \log c),\Theta(c)]$$
\end{theorem}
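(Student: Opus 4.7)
The strategy is to prove the two inclusions of the equality separately. Both rest on the combinatorial identity $\log \binom{c}{\le e} = \Theta(e \log c)$ for $e \le c^{1-\Omega(1)}$, which says that an $e$-error budget on a $c$-bit catalytic tape can carry $\Theta(e\log c)$ bits of information. The upper-bound direction converts this capacity into explicit free memory via syndrome decoding; the lower-bound direction goes the other way, using an error-ball encoding to spend a catalytic register's worth of ``slack'' as simulated free memory.

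\paragraph{Direction $\LCSPACE \subseteq \CSPACE$.}
Given $M \in \LCSPACE[s,c,e]$, I construct $M' \in \CSPACE[O(s+e\log c),\, c]$ using BCH codes. Let $\sigma : \{0,1\}^c \to \{0,1\}^{O(e \log c)}$ be the parity-check (syndrome) function of a binary BCH code of length $c$ correcting $e$ errors. Then $M'$ proceeds as follows. First, scan the initial catalytic tape $\tau$ bit by bit, accumulating $\sigma(\tau)$ in $O(e \log c)$ free space. Second, simulate $M$, arriving at some $\tau'$ with $d_H(\tau,\tau')\le e$. Third, recompute $\sigma(\tau')$ and, by linearity of the code, form $\sigma(\tau) \oplus \sigma(\tau') = \sigma(\tau \oplus \tau')$, the syndrome of a weight-at-most-$e$ error vector. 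Finally, invoke standard syndrome decoding (for instance, Berlekamp--Massey to recover the error locator polynomial, followed by Chien search over $\mathbb{F}_{2^{O(\log c)}}$); this uses only $O(e)$ field elements and therefore fits in $O(e \log c)$ space. Flipping the recovered $\le e$ positions of $\tau'$ restores $\tau$ exactly.

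\paragraph{Direction $\CSPACE \subseteq \LCSPACE$.}
This is the substantive direction. Given $M \in \CSPACE[s + e\log c,\, c]$, split $M$'s free memory into $s$ regular bits and $e \log c$ ``auxiliary'' bits $w$. Fix an injection $\phi : \{0,1\}^{e\log c} \to \{0,1\}^c$ with $\phi(0)=0$ and $\mathrm{image}(\phi)$ contained in the Hamming ball of radius $e$; such $\phi$ exists by the combinatorial identity, and can be chosen so that both $\phi$ and $\phi^{-1}$ are computable in $O(e \log c)$ space. I then construct $M' \in \LCSPACE[O(s), O(c), e]$ whose catalytic tape is laid out as $\tau \,\Vert\, R \,\Vert\, R^*$, with $\tau$ carrying $M$'s $c$ catalytic bits and $R, R^*$ two auxiliary $c$-bit scratch regions. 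After an initial XOR synchronization step that pairs $R$ with $R^*$, the simulation maintains the invariant that $R$ equals its own initial content XORed with $\phi(w)$. A write $w \to w'$ during $M$'s simulation is executed by XORing the weight-at-most-$2e$ vector $\phi(w)\oplus\phi(w')$ into $R$. At termination, $R$ differs from its initial value by $\phi(w_{\text{final}})$, which has Hamming weight at most $e$, matching the lossy budget exactly.

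\paragraph{Main obstacle.}
The delicate step is implementing \emph{read} access to $w$: extracting $\phi(w)$ from $R$ requires access to $R$'s arbitrary initial content, which we cannot cache in the $s$-bit free budget. The planned resolution is a paired-register scheme where the initialization $R^* \leftarrow R^* \oplus R$ sets things up so that $R \oplus R^*$ tracks $\phi(w)$ up to a fixed (though unknown) offset that can be absorbed into the choice of $\phi$ or canceled by a parallel bookkeeping register. Carefully orchestrating these XOR operations so that every intermediate read/write on $w$ is implementable with only $O(s)$ free bits---while ensuring that the net deviation on the auxiliary scratch at termination remains within the $e$-error budget---is the technical heart of the proof, and is where the condition $e \le c^{1-\Omega(1)}$ enters to guarantee that the encoding $\phi$ has enough slack.
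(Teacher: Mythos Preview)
Your forward direction (store a BCH syndrome, simulate, decode) is essentially the paper's argument and is fine. The gap is in the reverse direction, and it is exactly the obstacle you name: the paired-register scheme does not recover $w$. After $R^{*}\leftarrow R^{*}\oplus R$ you have $R\oplus R^{*}=\phi(w)\oplus R^{*}_0$, so the unknown $c$-bit offset $R^{*}_0$ is still present. More generally, any sequence of XOR updates among auxiliary catalytic registers is an $\mathbb{F}_2$-linear map of their unknown initial contents; such a map can shuffle the unknown around but never eliminate it, so $\phi(w)$ is simply not a function of the current tape. Since even a write $w\to w'$ must first read $w$ (to form $\phi(w)\oplus\phi(w')$), the simulation cannot take a single step. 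Neither ``absorbing the offset into $\phi$'' nor adding further bookkeeping registers escapes this linear-algebra obstruction.

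The paper's fix is to encode $w$ as a function of the \emph{current} catalytic state rather than as a difference from the initial one. Break the $e\log c$ auxiliary bits into $e$ chunks of $\log c$ bits, give each chunk its own catalytic block $B_i$, and declare chunk $i$ to equal the Hamming syndrome $\mem(\tau_i)_j=\bigoplus_{J}\bin(J)_j\,(\tau_i)_J$ of the block's current contents. Reads are now a straightforward $O(\log c)$-space scan of $\tau_i$. The key combinatorial fact is that changing any single bit of $\mem(\tau_i)$ is effected by flipping exactly one bit of $\tau_i$, at position $2^j$ depending only on \emph{which} output bit changed; since the simulated machine writes one work bit per step, each step costs one catalytic flip. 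One initialization flip per block sets $\mem(\tau_i)=0^{\log c}$, incurring exactly $e$ errors; all subsequent flips cancel in pairs because $\mem$ returns to $0$ at termination and the flip location is determined by the output bit, so the blocks end in their post-initialization state. The hypothesis $e\le c^{1-\Omega(1)}$ is used not for the existence of $\phi$ but to shrink the blocks enough that their total length is $\Theta(c)$ while each still yields $\Theta(\log c)$ bits.
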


Besides characterizing $\LCSPACE[s,c,e]$,
this allows us to understand the lay of the
land for $\LCL[e]$, i.e., $\CL$ with $e$ errors.
In particular, this recovers the result of \cite{GuptaJainSharmaTewari24},
which says that $\LCL[O(1)] = \CL$.
Furthermore, it gives intuition that this theorem is the best we can
hope for with respect to $e$, again assuming $\SPACE[e \log n]$ cannot be shown
to be in $\ZPP$ for any $e = \omega(1)$.

\begin{corollary} \label{cor:cl-for}
    For any $e := e(n)$,
    $$\LCL[e] = \CL \quad \mbox{implies} \quad \SPACE[O(e \log n)] \subseteq \ZPP$$
\end{corollary}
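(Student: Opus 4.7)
My strategy is to read off the corollary from Theorem~\ref{thm:main} by a direct parameter substitution, combined with the trivial simulation $\SPACE \subseteq \CSPACE$ (ignore the catalytic tape) and the classical bound $\CL \subseteq \ZPP$ of \cite{BuhrmanCleveKouckyLoffSpeelman14} recalled in the introduction.

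The first step is to unfold $\LCL[e] = \LCSPACE[O(\log n), n^{O(1)}, e]$ and apply Theorem~\ref{thm:main}. Choosing the polynomial bounding the catalytic tape length generously so that $e \leq c^{1-\Omega(1)}$ holds (any polynomially bounded $e$ is fine), and using $\log c = O(\log n)$, the theorem yields
$$\LCL[e] = \CSPACE[\Theta(\log n + e \log n),\, n^{O(1)}] = \CSPACE[\Theta(e \log n),\, n^{O(1)}].$$
The second step is the textbook observation that an ordinary Turing machine of space $O(e \log n)$ is a special case of a catalytic machine with free space $O(e \log n)$ and catalytic tape of any polynomial length, since it may simply refuse to write on the catalytic tape (so the reset condition is trivially satisfied), giving $\SPACE[O(e \log n)] \subseteq \CSPACE[O(e \log n),\, n^{O(1)}]$.

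Finally, assuming $\LCL[e] = \CL$, chaining the two displays produces
$$\SPACE[O(e \log n)] \subseteq \CSPACE[O(e \log n),\, n^{O(1)}] \subseteq \LCL[e] = \CL \subseteq \ZPP,$$
which is exactly the conclusion. The argument has essentially no obstacle; it is parameter bookkeeping on top of Theorem~\ref{thm:main} and two classical facts. The only point worth double-checking is the regime of validity: Theorem~\ref{thm:main} requires $e \leq c^{1-\Omega(1)}$, but this is automatic once the polynomial defining $c$ is taken of sufficiently larger degree than the one bounding $e$ (and if $e$ were super-polynomial the corollary would be moot anyway, since then $\SPACE[O(e \log n)]$ already contains $\PSPACE$).
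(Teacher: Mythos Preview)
Your proof is correct and follows essentially the same route as the paper: instantiate Theorem~\ref{thm:main} with $s = O(\log n)$, $c = \poly(n)$ to identify $\LCL[e]$ with $\CSPACE[\Theta(e\log n),\poly(n)]$, observe the trivial containment $\SPACE[O(e\log n)] \subseteq \CSPACE[O(e\log n),\poly(n)]$, and finish with $\CL \subseteq \ZPP$. Your additional remark about choosing the polynomial degree of $c$ large enough to satisfy the hypothesis $e \le c^{1-\Omega(1)}$ is a point the paper leaves implicit.
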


% \ian{This next part is not yet proven, but it would be a nice ``converse''
% to Corollary~\ref{cor:cl-for}. Intuitively, if $\SPACE[e \log c] \subseteq \ZPP$,
% then with $\CL \subseteq \ZPP$ this should be enough to get $\LCL[e] \subseteq \ZPP$.
% This would be nice because it would say that $\LCL[e] = \CL$ would imply a
% space upper bound we probably don't expect to prove (Corollary~\ref{cor:cl-for}),
% but $\LCL[e] \neq \CL$ would imply at least one space lower bound that we
% \textit{definitely} don't expect to prove (Corollary~\ref{cor:cl-rev}).
% Draft writeup follows:}

% {\color{red}
% While Corollary~\ref{cor:cl-for} indicates that proving $\LCL[e]$ is equal to $\CL$
% will be difficult for any $e = \omega(1)$, it also indicates that
% proving $\LCL[e]$ is \textit{not} equal to $\CL$ is similarly out of reach.
% This can be thought of as a partial converse to Corollary~\ref{cor:cl-for},
% but with two possible outcomes instead, each of them seemingly equally
% insurmountable.

% \begin{corollary} \label{cor:cl-rev}
%     Let $e := e(n)$, and assume that $\CL \subsetneq \LCL[e]$.
%     Then either
%     \begin{align*}
%     \mbox{1.\qquad}& \SPACE[e \log n] \not\subseteq \ZPP\mbox{, or} \\
%     \mbox{2.\qquad}& \CL \subsetneq \ZPP
%     \end{align*}
% \end{corollary}
% }

We also show that our proof extends to catalytic machines
with additional power beyond errors, namely \textit{non-deterministic}
and \textit{randomized} catalytic space.

\begin{theorem} \label{thm:main-resources}
    Let $\mathcal{C} \in \{\NCSPACE,\BPCSPACE\}$, and let
    $s := s(n), c := c(n), e := e(n)$ be such that
    $e \leq c^{1-\Omega(1)}$. Then
$$\LC[s,c,e] = \mathcal{C}[\Theta(s + e \log c),\Theta(c)]$$
\end{theorem}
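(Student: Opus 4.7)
My plan is to lift the two reductions behind Theorem~\ref{thm:main} to the non-deterministic and randomized settings, observing that both reductions are oblivious to the transition function of the underlying machine.

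For the inclusion $\mathcal{C}[\Theta(s + e \log c),\Theta(c)] \subseteq \LC[s,c,e]$, I would take an errorless $\mathcal{C}$ machine with $s + \Theta(e \log c)$ free bits and emulate its extra $e \log c$ bits using the freedom to leave up to $e$ positions of the catalytic tape in incorrect states. The natural device is to encode an $e \log c$-bit virtual register into the pattern of $\le e$ modified positions, since there are $\binom{c}{e} = 2^{\Theta(e \log c)}$ such patterns, and to implement reads and writes on this virtual register by sweeping the catalytic tape one position at a time and comparing against the stored original values. Because this emulation never inspects the step being simulated, non-deterministic or probabilistic branches of $\mathcal{C}$ are preserved verbatim.

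For the inclusion $\LC[s,c,e] \subseteq \mathcal{C}[\Theta(s + e \log c),\Theta(c)]$, I would simulate a lossy machine by an errorless one that maintains, in $O(e \log c)$ extra bits of free memory, an online log of the (at most $e$) positions where the catalytic tape currently deviates from the input. Every catalytic write is funneled through this log, so that at the end of the computation the simulator can roll back the recorded deviations and leave the tape perfectly intact. Again the transformation only examines catalytic writes and not the transitions producing them, so non-deterministic branches or coin flips of the lossy machine translate one-for-one into branches or coins of the simulator.

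\textbf{Main obstacle.}
The delicate point is matching the simulation semantics to the definitions of lossy $\NCSPACE$ and $\BPCSPACE$. For non-determinism, the lossy-reset condition must hold on every branch, so the simulator's error log has to be kept consistent along all paths --- including the rejecting ones --- and any branch on which more than $e$ deviations would arise must be aborted without silently flipping the accept/reject verdict. For randomness, the analogous guarantee is probabilistic, and the simulator must ensure that the joint event ``fewer than $e$ deviations occur'' and ``the output is correct'' happens with the required probability under the same random-tape distribution. Verifying that the slack $e \le c^{1-\Omega(1)}$ is enough to support both guarantees simultaneously is where the two settings genuinely depart from the deterministic argument, and will be the main technical point to pin down.
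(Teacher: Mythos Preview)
Both directions of your plan have genuine gaps that are not merely technical.

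For $\LC[s,c,e] \subseteq \mathcal{C}[\Theta(s+e\log c),\Theta(c)]$, your ``online log of the at most $e$ deviations'' assumes the catalytic tape never strays from the original in more than $e$ positions \emph{during} the run. That is false: the lossy condition only bounds the Hamming distance at the \emph{halting} configuration; intermediate configurations may differ from $\tau$ in arbitrarily many coordinates (already true for errorless catalytic machines). So your log can overflow immediately, and there is no way to abort-and-reject in the non-deterministic case without changing the accepted language. The paper instead encodes the initial tape $\tau$ as a BCH codeword using $O(e\log c)$ extra free bits, runs the lossy machine unchanged on the first $c$ symbols, and then \emph{decodes} at the end; since decoding corrects an arbitrary pattern of up to $e$ errors, it is indifferent to which branch or random string produced them. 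This is precisely why the paper abandons the reversing argument of \cite{GuptaJainSharmaTewari24} here: reversibility does not survive non-determinism or randomness, but error correction does.

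For $\mathcal{C}[\Theta(s+e\log c),\Theta(c)] \subseteq \LC[s,c,e]$, you propose to store the virtual $e\log c$-bit register as ``the pattern of $\le e$ modified positions'' and read it back by ``comparing against the stored original values''. But the original values are exactly what you do not have: you are given $s+\log c$ free bits, not $c$, and the catalytic tape itself has been overwritten at the positions you care about. The combinatorial count $\binom{c}{e}=2^{\Theta(e\log c)}$ is correct, but you need a readout that depends only on the \emph{current} tape contents. The paper achieves this with the Hamming-syndrome map $\mem(\tau)_j=\bigoplus_J \bin(J)_j\,\tau_J$: first flip one bit per block to force $\mem=0$, then use $\mem$ as the register (any single-bit change to $\mem$ is effected by a single, computable bit-flip to the block), and finally drive $\mem$ back to $0$ bit by bit. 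Claim~\ref{claim:revert-chessboard} guarantees this returns each block to its post-initialization state, so exactly the $e$ initialization errors remain---independently of the non-deterministic or random branch taken.

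Finally, your ``main obstacle'' rests on a misreading of the definitions. In the paper, the (lossy) reset condition for both $\NCSPACE$ and $\BPCSPACE$ is required to hold for \emph{every} witness or random string, not with high probability; there are no branches with more than $e$ terminal errors to abort, and the parameter restriction $e\le c^{1-\Omega(1)}$ plays no role in this part of the argument (it only controls the catalytic-tape blowup in Theorem~\ref{thm:main-rev}).
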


We briefly remark that the $e \leq c^{1-\Omega(1)}$ restriction in
all our results is only needed to get the constant stretch in the
catalytic tape; we discuss the unrestricted setting in Section~\ref{sec:thm-rev}.

\subsection{Open problems}
\paragraph{Errors in expectation.}
A related question asked in \cite{Mertz23} is whether or not $\CL$
is equivalent to $\CL$ with $O(1)$ errors allowed \textit{in expectation}
over all starting catalytic tapes.
This represents a different notion of distance between catalytic
tapes, in opposition to Hamming distance, that may be more applicable
to settings such as quantum computation.
However, no results are known for expected errors, and all
techniques in this paper fail to restore the tape in pathological
cases where a few starting tapes end up with potentially 
many errors.
% This question has received some
% attention in a related form by Bisoyi et al.~\cite{BisoyiDineshRaiSarma24},
% \ian{This paper, like the GJST one, is actually unpublished as of yet, so between that at the quantum one being upcoming (see above comment) maybe we should comment out this section for the first draft. It was submitted to ISAAC which has acceptance notices soon so if it gets in then we can mention it.}
% who introduce \textit{almost} and \textit{property} catalytic machines, but their
% results are more structured and do not give any indication of
% whether or not allowing $O(1)$ errors in expectation should be possible in general.
% Such errors are also more reflective of the errors allowed in the quantum
% computation model,~\cite{BuhrmanFolkertsmaMertzSpeelmanStrelchukSubramanianTupker24},
% but here again only an exponentially small error tolerance is known.

\paragraph{Randomized error-prone catalytic space.}
Recent work of Cook et al.~\cite{CookLiMertzPyne24} shows that
$\CSPACE[s,c] = \BPCSPACE[O(s),\poly(c)]$, which, in conjunction with
Theorem~\ref{thm:main-resources}, seems to indicate
that our theorems can be unified to show the connection between ordinary
$\CSPACE$ and $\CSPACE$ which is both randomized and lossy,
i.e.\ $\CSPACE[s + e \log c,c] = \LBPCSPACE[O(s),\poly(c),e]$.
This would characterize how deterministic catalytic
space handles both natural kinds of ``error'', namely both error in
the output from the randomness and error in resetting the catalytic tape.

However, the proof of \cite{CookLiMertzPyne24} only works when $c = 2^{\Theta(s)}$,
and our connection to error-prone space incurs an $e \log c$ blowup in
free space, putting us outside this regime. A generalization of
their result, i.e.\ showing $\CSPACE[s,c] = \BPCSPACE[O(s),\poly(c)]$
for \textit{every} $s$ and $c$,\footnote{Note that the polynomial blowup allowed
in the catalytic tape means this result would not yield novel derandomization
for ordinary space; even for $s, c = O(\log n)$ this would only show
that derandomization overheads can be pushed into a polylogarithmic length
catalytic tape, which was already shown by Pyne~\cite{Pyne23}.}
would tie off this connection.

\paragraph{Lossy catalytic branching programs.}
Due to the flexibility in the conditions of Theorem~\ref{thm:main},
the results of Theorem~\ref{thm:main-resources} are likely to extend
to other settings catalytic settings; for example, it is immediate to
extend both results to $\CSPACE$ with \textit{advice}. We focus on
non-determinism and randomness simply because these are two of the
most well-studied catalytic variants, and future works are free to
adapt these proofs to their own settings.

In terms of notable omissions, however,
one setting where one direction does not yet extend, and which is very related to
advice, is the \textit{catalytic branching program} model, which is
a syntactic, and by extension non-uniform, way of capturing $\CSPACE$.
The issue here is simply that such machines can read and write their
entire work tape in one step, which our simulation of
$\CSPACE$ by $\LCSPACE$ is unequipped to handle.
As we will note in Appendix~\ref{sec:reverse},
showing such branching programs are \textit{reversible} would be
sufficient to close this off.

\paragraph{Exact Simulation Space Requirements} In the current simulation of errors using clean space, we use $4e \log c$ clean space. By contrast, in our simulation of clean space using errors, we use only $e$ more errors. If errors can be simulated in clean space $e \log c$ instead, then there is only very low overhead in switching between the two perspectives. This would tighten the correspondence between errors and space that we establish. However, since the distance between two codewords required to correct $e$ errors is $2e + 1$, a different error correction code would be necessary to reach clean space $e \log c$.
\section{Preliminaries}

We begin by defining catalytic machines as introduced by
Buhrman et al.~\cite{BuhrmanCleveKouckyLoffSpeelman14}.

\begin{definition}[Catalytic space]
A \textit{catalytic Turing Machine} is a space-bounded Turing machine
with two work tapes: 1) a read-write work tape of length $s(n)$ which
is initialized to $0^{s(n)}$, and
2) a read-write \textit{catalytic tape} of length $c(n) \leq 2^{s(n)}$ which is
initialized to an arbitrary state $\tau \in \{0,1\}^{c(n)}$.
On any input $x \in \{0,1\}^n$ and initial catalytic state $\tau$, a catalytic
Turing machine has the property that at the end of the computation
on input $x$, the catalytic tape will be in the initial state $\tau$.
\end{definition}

In this work we focus on a relaxation of catalytic space by
Gupta, Jain, and Sharma~\cite{GuptaJainSharmaTewari24},
where we are allowed to make some errors in resetting the catalytic tape.

\begin{definition}[Lossy catalytic space]
A \textit{lossy catalytic Turing Machine with $e(n)$ errors} is a
catalytic machine where at the end of the computation on
any input $x \in \{0,1\}^n$ and initial catalytic state $\tau$, 
instead of requiring that the catalytic tape be in state $\tau$,
the catalytic tape can be in any state $\tau'$
such that $\tau$ and $\tau'$ differ in at most $e(n)$ locations.
\end{definition}

Lastly we specify the basic complexity classes arising from our two
catalytic definitions, as well as their specification to the
``logspace'' setting, where most research interest at the moment lies.

\begin{definition}
We write
\begin{itemize}
    \item $\CSPACE[s,c]$: the class of languages which can be recognized by catalytic
    Turing Machines with work space $s := s(n)$ and catalytic space $c := c(n)$.
    \item $\LCSPACE[s,c,e]$: the class of languages which can be recognized by
    lossy catalytic Turing Machines with work space $s := s(n)$,
    catalytic space $c := c(n)$, and $e := e(n)$ errors.
\end{itemize}
We additionally write
\begin{itemize}
    \item $\CL := \CSPACE[O(\log n), \poly n]$
    \item $\LCL[e] := \LCSPACE[O(\log n), \poly n, e]$
\end{itemize}
\end{definition}
\section{Main theorem}
\label{sec:thm}

In this section we will prove Theorem~\ref{thm:main}.
We will do so via a simulation argument for each direction in turn.

% \begin{theorem} \label{thm:main-technical}
% Let $s := s(n), c := c(n), e := e(n)$ be such that $e = o(c \cdot ((\log \log c)/(\log c))^2)$. Then
% \begin{enumerate}
%     \item $\LCSPACE[s,c,e] \subseteq \CSPACE[s + 2e \log c, c]$
%     \item $\CSPACE[s + e \log c,c] \subseteq \LCSPACE[s + \log c,(1+o(1))c,e]$
% \end{enumerate}
% \end{theorem}
% \begin{proof}[Proof of Theorem~\ref{thm:main-technical}]
% We will prove each direction of Theorem~\ref{thm:main-technical} in turn
% via a simulation argument.

\subsection{Simulating errors with space}
\label{sec:thm-for}

First, we show that $\LCSPACE[s,c,e] \subseteq \CSPACE[O(s + e \log c),O(c)]$.
In fact, we will not need any increase in the length of our catalytic tape.

\begin{theorem}\label{thm:main-for}
    Let $s := s(n), c := c(n), e := e(n)$. Then
    $$\LCSPACE[s,c,e] \subseteq \CSPACE[s + O(e \log c),c]$$
\end{theorem}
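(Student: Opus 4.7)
The plan is to simulate a lossy catalytic machine $M$ with parameters $(s,c,e)$ by an errorless catalytic machine with free space $s + O(e \log c)$ and catalytic length $c$, by storing, in free memory, a syndrome of the initial catalytic tape $\tau$ under a suitable error-correcting code. Since $M$ leaves its catalytic tape in some $\tau'$ of Hamming distance at most $e$ from $\tau$, comparing the original syndrome to the syndrome of $\tau'$ will let me recover and undo the error vector $\delta := \tau \oplus \tau'$.

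To carry this out, I would fix a binary linear code (e.g.\ BCH) with parity-check matrix $H \in \mathbb{F}_2^{r \times c}$ having $r = O(e \log c)$ rows and correcting any error pattern of weight at most $e$. The reason to prefer such a code is that each column of $H$ is explicitly indexable in $O(\log c)$ space (the BCH columns are just powers of a primitive element of $\mathbb{F}_{2^m}$ for $2^m \geq c$), so the product $H v$ for any $v \in \mathbb{F}_2^c$ can be computed in streaming fashion by scanning $v$ once, maintaining $r + O(\log c) = O(e \log c)$ bits of state. The simulation then has three phases: first, read the starting catalytic tape $\tau$ without modifying it to compute and store $s_0 := H\tau$ in $r$ free bits; second, run $M$ using the catalytic tape as its catalytic resource, producing the output and leaving the tape in some $\tau'$, while the stored $s_0$ sits untouched in a reserved register; third, compute $s_1 := H\tau'$ and XOR it into the stored $s_0$ to obtain $s := H\delta$. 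By the code's correction guarantee, $\delta$ is the unique weight-$\leq e$ vector with $H\delta = s$, so I decode by enumerating all subsets $S \subseteq [c]$ of size at most $e$ using an $O(e \log c)$-bit counter, recomputing the candidate syndrome $H\mathbf{1}_S$ on the fly, and checking equality with $s$; once the support of $\delta$ is found, iterating over those $\leq e$ positions and flipping the corresponding catalytic tape bits restores $\tau$ exactly.

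The space budget works out to $s$ (for $M$ during phase two) plus $O(e \log c)$ persistent free bits holding the syndrome, plus $O(e \log c)$ transient workspace in phases one and three, for a total of $s + O(e \log c)$, with no increase in catalytic space. The main obstacle I expect is keeping the decoding step inside this budget: rather than invoke a sophisticated BCH decoder whose low-space implementation would require careful examination, the brute-force enumeration sidesteps the issue using only raw counter bits and cheap matrix-vector products, at the cost of increased time rather than space. A minor point to verify is that phase one preserves $\tau$ bit-for-bit, so that $M$'s phase-two behaviour is identical to its behaviour in the original lossy machine; this is automatic since computing $H\tau$ only reads the catalytic tape.
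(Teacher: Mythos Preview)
Your approach is correct and essentially the same as the paper's: both store BCH parity information for the initial catalytic tape in $O(e\log c)$ free bits, run the lossy machine, and then use the stored parity data to locate and undo the at-most-$e$ errors. The one notable difference is the decoding step: the paper implements a space-efficient Berlekamp-style BCH decoder (whose analysis occupies an entire appendix), whereas you decode by brute-force enumeration of all $\binom{c}{\leq e}$ candidate error supports using an $O(e\log c)$-bit counter. Your choice is more elementary and sidesteps any nontrivial finite-field algorithmics, at the cost of running time; since only space is at issue here, this is a valid and arguably cleaner route to the same bound.
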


We note that this was also proven in \cite{GuptaJainSharmaTewari24}
for the case of $\LCL[O(1)]$, but we will pursue a different
proof, based on error-correcting codes, which will allow us to
generalize to other catalytic models in Section~\ref{sec:corollaries}.

\begin{proof}
Let $M_e$ be an $\LCSPACE[s,c,e]$ machine.
We will devise a $\CSPACE[s + O(e \log c), c]$ machine $M_0$ which simulates $M_e$.
Note that in this section, we will not use our parameter restriction on $e$;
this direction holds for every setting of $s$, $c$, and $e$. We will presume that $e \leq \frac{c}{\log(c)}$, otherwise the inclusion becomes trivial.

Our simulation will go via an error-correcting code.
In particular we will use \textit{BCH codes}\footnote{Technically
because of our parameters, they can even be considered Reed-Solomon
codes, which are a special case of BCH codes; nevertheless we
follow the presentation of the more general code form.} ($\BCH$),
named after Bose, Ray-Chaudhuri, and Hocquenghem~\cite{bose1960class,hocquenghem1959codes},
which we define as per~\cite{DodisReyzinSmith04,DodisOstrovskyReyzinSmith06}.
We define the mapping $\BCH$ and prove the following lemma in
Appendix~\ref{sec:arithmetic} (see Corollary~\ref{cor:existance_BCH}, Lemma~\ref{lemma:encoding} and Lemma~\ref{lemma:decoding}).

\begin{lemma} \label{lem:bch}
    Let $q := 2^{\lceil \log(c + e)\rceil}$. There exists a mapping
    $\BCH: \mathbb{F}_q^q \rightarrow \mathbb{F}_q^q$ with the following operations:
    % \begin{itemize}
    % \item \textbf{Alphabet:} the input and output to $\BCH$ will be represented by elements
    % of $GF(q)$ for a prime power $q = 2^k$, where we set $k = \log(c + e)$;
    % furthermore we will use $m = 1$ \ian{Note to self: $m$ is just the size of the output, so $m=1$ is GFq to GFq.}, which in fact makes this a \textit{RS-code}.
    % \item \textbf{Distance:} outputs of $\BCH$ will be separated by minimum distance
    % $\delta := 2e + 1$, with the goal of correcting up to $\delta/2 - 1 = e$ errors.
    % \item \textbf{Length:} each output to $\BCH$ will have length
    % $c + (2e +1) \lceil\log( c + e)\rceil$ in bits,
    % where $(2e +1) \lceil\log(c + e)\rceil$ are bits form the added free worktape. 
    % \end{itemize}
    % $\BCH$ will have two operations, namely encoding and decoding:
    \begin{itemize}
    \item \textbf{Encoding:} $\Enc_{\BCH}$
    takes as input a string $S$ of length $c$, plus an additional
    $(2e + 1)\lceil \log(c + e) \rceil$ bits initialized in $0$,
    and outputs a codeword $S_{enc}$:
    \[
        S + [0]_{(2e + 1)\lceil \log(c + e) \rceil} \rightarrow_{\Enc}
        S_{enc}
    \]
    Furthermore, all outputs $S_{enc}$ generated this way have
    minimum distance $\delta := 2e + 1$ from one another.
    \item \textbf{Decoding:} $\Dec_{\BCH}$ takes as input a string
    $S_{enc}'$ of length $c +(2e + 1)\log(c + e)$, with the promise
    that there exists a string $S$ of length $c$ such that
    $\Enc_{\BCH}(S + [0]_{2e \log(c + e)})$ differs from $S_{enc}'$
    in at most $\delta/2 - 1 = e$ locations, and outputs this string $S$:
    % that recovers $S_{enc}$ from $S_{enc}'$ and returns the string $S$
    % and the $(2e+1)\log(c + e)$ bits to $0$:
    \[
        S_{enc}' \rightarrow_{\Dec} S + [0]_{(2e+1)\log(c + e)}
    \]
    \end{itemize}
    Furthermore, both $\Enc_{\BCH}$ and $\Dec_{\BCH}$ can be
    computed in space $O(e \log c)$.
\end{lemma}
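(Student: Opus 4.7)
The plan is to instantiate $\BCH$ as a Reed--Solomon code, which is a special case of a BCH code that meets all three requirements at once. Set $q := 2^{\lceil \log(c+e)\rceil}$ and fix a primitive element $\alpha \in \mathbb{F}_q$; throughout I would treat the input/output tapes as vectors of $q$ symbols in $\mathbb{F}_q$, each occupying $\log q = O(\log(c+e))$ bits. The concatenation of $S$ with the $(2e+1)\lceil \log(c+e)\rceil$ zero bits parses into a message polynomial $m(x)$ of degree less than $q - (2e+1)$. The codeword is then the unique polynomial $S_{\mathrm{enc}}(x)$ of degree less than $q$ that agrees with $m(x)$ on the information positions and has $\alpha, \alpha^2, \ldots, \alpha^{2e}$ as roots. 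The classical BCH distance bound, proved by a Vandermonde argument on any $2e$ consecutive syndromes, then gives minimum distance $2e+1$.

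For encoding I would use the systematic scheme, computing the remainder of $x^{2e+1} m(x)$ modulo the generator polynomial $g(x) := \prod_{i=1}^{2e}(x - \alpha^i)$ by streaming long division: we write $m$ into the $c$ original bits and the remainder into the $(2e+1)\lceil \log(c+e)\rceil$ zero slots. This needs an $O(e \log q)$-bit buffer for the running remainder plus $O(\log q)$ for the current symbol, fitting the $O(e \log c)$ budget. Arithmetic in $\mathbb{F}_q$ is carried out using a fixed irreducible polynomial of degree $\log q$ over $\mathbb{F}_2$, which can be produced and applied in space $O(\log q)$.

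Decoding follows the standard Peterson--Gorenstein--Zierler pipeline. First compute the $2e$ syndromes $s_i := S_{\mathrm{enc}}'(\alpha^i)$ by streaming across the received word, accumulating into a syndrome vector of total size $O(e \log q)$. Next, run Berlekamp--Massey on the syndromes to obtain an error-locator polynomial $\sigma(x)$ of degree at most $e$, using $O(e \log q)$ space for $\sigma$ plus one auxiliary polynomial. Finally, a Chien search enumerates $\alpha^{-j}$ to find the error positions and Forney's formula recovers the magnitudes, both in the same space; subtracting the resulting error pattern from the received word yields $S_{\mathrm{enc}}$ and hence $S$.

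The main obstacle is verifying that each of these stages, and in particular the Berlekamp--Massey update steps, really fits into $O(e \log c)$ space; the algorithms themselves are classical, so the substantive work lies in choosing streamable implementations and carefully accounting for field-arithmetic scratch. Once the space bounds are pinned down, the lemma decomposes cleanly into the three appendix statements (existence, encoding, and decoding) invoked by Theorem~\ref{thm:main-for}.
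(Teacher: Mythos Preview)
Your plan is sound and lands on the same code the paper uses: a BCH code with $m=1$ over $\mathbb{F}_q$ with $q=2^{\lceil\log(c+e)\rceil}$, which is exactly a Reed--Solomon code. Where you diverge is in the algorithmic machinery for the two operations. For encoding, you use the systematic scheme via streaming long division by the generator polynomial; the paper instead computes the partial syndromes $s_i'$ of the length-$c$ data and then solves the $2e\times 2e$ Vandermonde system $\sum_{x\in C} d_x x^i = -s_i'$ for the appended symbols, invoking Cramer's rule and the $\mathsf{NC}^2$ determinant of Borodin--Cook--Pippenger over well-endowed rings. For decoding, you run Berlekamp--Massey followed by a Chien search and Forney's formula; the paper solves the key equation $S(z)\sigma(z)\equiv\omega(z)\pmod{z^\delta}$ \emph{directly} as a linear system, again via determinants and Cramer's rule, shrinking the parameter until the system is nonsingular and then brute-forcing the roots of $\sigma$.

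Both routes hit the $O(e\log c)$ target. Your approach is the classical coding-theory pipeline and is arguably cleaner to analyze once you accept that Berlekamp--Massey keeps only two degree-$\le e$ polynomials; the paper's route trades that for a single black-box appeal to space-efficient linear algebra, which lets them track explicit constants like $(4e+6)r_n$ rather than asymptotics. One small point to tighten: you assert a primitive $\alpha$ and an irreducible of degree $\log q$ can be ``produced in space $O(\log q)$''; the paper spends several lemmas on exactly this (enumerate-and-test for the irreducible, then take $\alpha=\xi$), so you should at least gesture at why this search fits the space budget rather than leave it as a one-liner.
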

% For our purposes it is enough to focus on the main properties:
% \begin{itemize}
%     \item The seeds of $\BCH$ are of length $c$, while the output
%     codewords are of length $c + (2e+1)\lceil \log(c + e)\rceil$.
%     \item $\BCH$ can correct up to $e$ errors.
%     \item The functions $\Enc_{\BCH}$ and $\Dec_{\BCH}$ are both
%     computable in place using at most $O(e\log c)$ additional space.
% \end{itemize}
We now move on to the simulation of our $\LCSPACE[s,c,e]$ machine $M_e$.
Our $\CSPACE[s + O(e \log c),c]$ machine $M_0$ acts as follows:
\begin{enumerate}
    \item \textbf{Initialization:} use the function $\Enc_{\BCH}$ to encode the initial state
    $\tau$ of the catalytic tape into a codeword, using
    $(2e+ 1)\lceil\log(c + e)\rceil$ additional bits from clean space,
    \[
        \tau + [0]_{(2e + 1)\lceil\log(c + e)\rceil} \rightarrow_{\Enc} \tau_{enc}.
    \]
    
    \item \textbf{Simulation:} Run $M_e$ using clean space $s$ and the first $c$ bits of $\tau_{enc}$
    as the catalytic tape. When $M_e$ finishes the calculation, we record the answer
    in a bit of the free work tape. The catalytic tape is, at this point,
    in a state $\tau_{enc}'$ which differs in at most $e$ locations from $\tau_{enc}$.

    \item \textbf{Cleanup:} use the function $\Dec_{\BCH}$ to detect and correct
    our resulting catalytic tape $\tau_{enc}'$:
    \[
        \tau_{enc}' \rightarrow_{\Dec} \tau + [0]_{(2e+1)\lceil \log(c + e)\rceil}
    \]
    Once we finish this process, we output our saved answer and halt.
\end{enumerate}
% Note that the first step requires property (2) and (3) of $\BCH$
% and that the last step requires property (1) and (3) of $\BCH$.
The correctness of $M_0$ is clear, as it gives the same output as $M_e$.
By our error guarantee on $M_e$ and the correctness of $\Dec$,
our catalytic tape is successfully reset to $\tau$.
Our catalytic memory is $c$ as before, while for our free work space
we require $s$ bits to simulate $M_e$,
an additional $(2e + 1)\lceil \log(c+e)\rceil = (2+o(1))e \log c$
zero bits for our codewords, and $O(e \log c)$ space for
$\Enc_{\BCH}$ and $\Dec_{\BCH}$, for $s + O(e \log c)$ space in total.
\end{proof}

\begin{note}
    There is an alternative proof of this point, one which gets better parameters
    and relies on an interesting characterization of space, namely the
    \textit{reversibility} of space.
    This proof is a simplification and extension of the one originally provided
    in \cite{GuptaJainSharmaTewari24}, and we provide it
    in Appendix~\ref{sec:reverse} for those interested.
\end{note}

\subsection{Simulating space with errors}
\label{sec:thm-rev}

We now show the other direction of Theorem~\ref{thm:main},
i.e.\ $\CSPACE[s + e \log c,c] \subseteq \LCSPACE[O(s),O(c),O(e)]$.

\begin{theorem} \label{thm:main-rev}
Let $s := s(n), c := c(n), e := e(n)$, and $\epsilon > 0$ be such that
$e = o(c^{\epsilon/(1+\epsilon)})$. Then
$$\CSPACE[s + e \log c,c] \subseteq \LCSPACE[s + \log c,(1+o(1))c,(1+\epsilon)e]$$
\end{theorem}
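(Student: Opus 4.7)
The plan is to simulate $M \in \CSPACE[s + e\log c, c]$ by a lossy machine $M' \in \LCSPACE[s + \log c, (1+o(1))c, (1+\epsilon)e]$ that turns each error on the catalytic tape into roughly $\log c$ bits of virtual clean space. The idea is information-theoretically natural: a single bit flip inside a region of size $K$ encodes $\log K$ bits (namely its position), so $(1+\epsilon)e$ flips in regions of size about $c^{1/(1+\epsilon)}$ can hold $e \log c$ bits of state, matching the clean-space difference.

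Concretely, I would augment the catalytic tape with an auxiliary region of size $\Delta := (1+\epsilon)e \cdot b$ where $b := c^{1/(1+\epsilon)}$; the hypothesis $e = o(c^{\epsilon/(1+\epsilon)})$ gives $\Delta = o(c)$, so the total catalytic tape stays within $(1+o(1))c$. This region is partitioned into $B := (1+\epsilon)e$ blocks of size $b$, and $M$'s extra $e \log c$ clean bits are regrouped into $B$ ``words'' of $\log b = \log c/(1+\epsilon)$ bits each, so that $B \log b = e \log c$ exactly. The $j$-th word $v_j \in [b]$ is represented by flipping a single bit of block $j$ at position $v_j$, contributing one Hamming error per used block and $(1+\epsilon)e$ errors in total. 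The simulation then runs $M$ step by step: the $s$ base clean bits of $M$ are mirrored directly in $M'$'s clean space, and the remaining $\log c$ bits serve as a buffer holding the value of the currently active word. Reads and writes to $M$'s extra clean region are routed to this buffer, and a transition to a new word requires committing the buffer back to the old block and loading from the new one.

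The main obstacle is the loading step: given a block in its current state $\tau_j$ with a single bit flipped, one must recover the flip position $v_j$ without having $\tau_j$ available, so naive bit-by-bit comparison is impossible. My approach would be to never decode a block from scratch and instead maintain the invariant that the buffer always equals the flip position of the currently active block, so that a swap amounts to carrying this position through a reversible in-place exchange rather than an open-ended search. To make this consistent with $M$'s arbitrary access pattern I would first normalize $M$, via a standard reduction, into a form whose extra clean region is touched one word at a time in a predictable sequence; the $\epsilon e$ slack in the error budget should absorb the bookkeeping overhead of both the normalization and the swap mechanism itself. Verifying that the flip-based encoding stays consistent across every operation $M$ performs, in particular operations that couple the base clean bits to the extra word buffer, is where I expect the bulk of the technical work to lie.
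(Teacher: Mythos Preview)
Your high-level plan and parameter arithmetic are exactly right and coincide with the paper's: add an auxiliary catalytic region of length $(1+\epsilon)e \cdot c^{1/(1+\epsilon)} = o(c)$, split it into $(1+\epsilon)e$ blocks, budget one error per block, and extract $\log c/(1+\epsilon)$ virtual clean bits from each so that the total comes to $e\log c$.

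The genuine gap is the encoding and your proposed workaround for reading it. If $v_j$ is represented by flipping position $v_j$ of block $j$, then $v_j$ is information-theoretically unrecoverable from the current block alone: the block is $\tau_{j}$ with one unknown bit flipped, over an unknown $\tau_{j}$, so every position is equally consistent. Your buffer-and-swap scheme does not escape this. When control moves from word $j$ to word $j'$ the buffer holds $v_j$, and nothing available to $M'$ determines $v_{j'}$; no ``reversible in-place exchange'' can extract it, and normalizing $M$ to touch words in a predictable (e.g.\ adjacent) order does not help, because any nontrivial use of $e\log c$ bits of workspace must eventually revisit a word whose value was written earlier and then left. The $\epsilon e$ slack in the error count is also not free for bookkeeping---it is already fully spent on the block count. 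What you are missing is an encoding that can be \emph{read from the current block state by itself}. The paper uses the Hamming syndrome $\mem(\tau)_j = \bigoplus_{J} \bin(J)_j \cdot \tau_J$, a $(\log b)$-bit linear function of the block computable in $O(\log b)$ space with no reference to the original contents. An initialization step flips one bit per block to zero out $\mem$ (this is the sole error charged to that block); thereafter, toggling bit $j$ of the virtual word is realized by flipping block position $2^j$, and because this correspondence is a bijection, returning $\mem$ to $0$ at the end exactly cancels every intermediate flip, leaving only the initialization error. With this device there is no buffer, no swap, and no restriction whatsoever on $M$'s access pattern.
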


Since $s \geq \log c$ by the definition of a catalytic machine,
this achieves the reverse direction of Theorem~\ref{thm:main} with very
small blowups in $s$ and $c$, and for $e$ bounded by a small polynomial
in $c$ we get a negligible error blowup as well. Note that we allow
$\epsilon > 1$, and so our proof is not limited to $e < c^{1/2}$; however,
we will pay for larger values of $e$ in the error blowup, and for
$e = c^{1-o(1)}$ this factor becomes superconstant.

To understand our construction, we will first prove a version with
looser space parameters. This result is incomparable to Theorem~\ref{thm:main-rev};
although we lose a factor of $e$ in our catalytic space, in exchange
we have no restrictions on $e$ and no loss in $e$ either.

\begin{theorem} \label{thm:main-rev-easy}
Let $s := s(n), c := c(n), e := e(n)$ be such that $c$ is a power of 2. Then
$$\CSPACE[s + e \log c,c] \subseteq \LCSPACE[s + (\log e + \log c + 2),(e+1)c,e]$$
\end{theorem}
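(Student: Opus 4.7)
The plan is to simulate the $\CSPACE[s + e\log c, c]$ machine $M$ by partitioning the lossy machine's catalytic tape of length $(e+1)c$ into $e+1$ contiguous blocks $T_0, T_1, \ldots, T_e$ each of length $c$, with initial contents $\tau_0, \tau_1, \ldots, \tau_e$. Block $T_0$ plays the role of $M$'s catalytic tape. The extra $e\log c$ clean bits of $M$'s state are interpreted as $e$ words $v_1, \ldots, v_e \in \{0, \ldots, c-1\}$, and word $v_i$ is encoded inside block $T_i$ by a single-flip rule: $T_i$ is held in the state $\tau_i \oplus \chi_{v_i}$, where $\chi_v$ is the indicator string of position $v$ and $\chi_0 = 0^c$. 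Thus each non-zero $v_i$ costs exactly one flipped bit in its own block, so the total Hamming distance of the final catalytic tape from its initial content is at most $e$, matching the error budget exactly.

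The $\log e + \log c + 2$ additional clean bits above the $s$ base bits serve as a small workspace, enough to hold a pointer into the catalytic tape (a block index in $\log e$ bits and a position within a block in $\log c$ bits) plus a couple of status bits. I would then simulate $M$ step-by-step: each atomic bit-read or bit-write of $M$ is translated into an operation on our data structures. Reads and writes of $M$'s base $s$ bits happen directly in our clean space; those of $M$'s catalytic tape happen in $T_0$; and those of a bit of $v_i$ become operations on $T_i$. An update $v_i \mapsto v_i'$ is implemented by flipping position $v_i$ of $T_i$ (to undo the current encoding) followed by flipping position $v_i'$ (to install the new one), restoring the invariant $T_i = \tau_i \oplus \chi_{v_i'}$.

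The central obstacle is that we cannot directly \emph{read} the current value of $v_i$: given only $T_i$'s present state, without $\tau_i$ as a reference, we do not know which bit is flipped or whether any is. My plan to sidestep this is to enforce the invariant that the currently active word $v_i$ is always loaded into the workspace, so that reads and updates can be resolved without ever interrogating $T_i$ as an unknown string; every $v_i$ starts at $0$ and is only ever modified through explicit updates we issue, so its value is determined by the sequence of our own actions rather than by decoding the block. The difficulty is that our workspace only holds one word at a time, yet $M$ may switch between different $v_i$'s in short succession, so swapping the active word has to be done carefully. I expect to handle this by first preprocessing $M$ into a reversible machine (via Lange--McKenzie--Tapp) so that inactive word values can, when needed, be recomputed by re-running the relevant stretch of $M$'s trajectory in constant workspace, rather than stored explicitly; the active word can then always be loaded on demand from this recomputation. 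Getting this swap-and-recompute mechanism to fit in only $\log e + \log c + O(1)$ extra clean bits while ensuring that every auxiliary flip introduced during recomputation is undone before the final halt, so that only the intended ``encoding'' flips remain, is the step I expect to require the most care.
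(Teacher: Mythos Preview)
Your block decomposition and the idea of encoding one $\log c$-bit word per block with a single flipped position are exactly the right skeleton, and you have correctly put your finger on the central obstacle: with your encoding $T_i = \tau_i \oplus \chi_{v_i}$, the word $v_i$ is information-theoretically unrecoverable from $T_i$ alone. Unfortunately, your proposed workaround via reversibility does not close this gap. To re-run any stretch of $M$'s trajectory---forward from the start or backward from the present---the simulating machine must at every step have access to $M$'s entire instantaneous configuration, which includes \emph{all} of the words $v_1,\ldots,v_e$, not just the one currently cached in your workspace. Those inactive words live only inside blocks you cannot decode, so you cannot take even a single step of $M$ without them; the recomputation is circular. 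Running from scratch does not help either: during such a replay you would again need to maintain all $e$ words simultaneously, which is precisely the $e\log c$ clean bits you are trying to avoid, and you cannot reuse the blocks $T_i$ for this because they are already committed to encoding the ``real'' values you would be overwriting.

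The paper avoids the problem altogether by choosing a different encoding: instead of $\tau_i \oplus \chi_{v_i}$, it stores the word as the Hamming-code syndrome
\[
\mem(T_i)_j \;=\; \bigoplus_{J \in [c]} \bin(J)_j \cdot (T_i)_J,
\]
which is a function of the \emph{current} block contents only and is computable in $\log c + 1$ workspace. Hence every $v_i$ can be read directly at any time, with no caching, swapping, or recomputation. The second property of this map---that changing any single bit $j$ of $\mem(T_i)$ is achieved by flipping the unique position $2^j$ of $T_i$---plays the role your two-flip update rule plays, but with one flip per bit-change rather than per word-change; a short parity argument then shows that once $\mem(T_i)$ is restored to its post-initialization value, the block itself is restored, so only the single initialization flip per block survives. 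This syndrome encoding is the missing idea; once you have it, the proof becomes a straightforward step-by-step simulation with no reversibility needed.
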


\begin{proof}
Let $M_0$ be a $\CSPACE[s + e \log c, c]$ machine.
We will devise a $\LCSPACE[s + (\log e + \log c + 2),(e+1)c,e]$ machine
$M_e$ which simulates $M_0$. 
% 
% For simplicity, we first devise an $\LCSPACE[s + \log e + \log c + 1,(e+1)c,e]$ machine
% to simulate $M_0$, and later show that the catalytic space can be reduced
% to the claimed bound.
By the definition of a Turing machine, we will assume that in any time
step $M_0$ only reads and writes at most one bit on the work tape.

For any string $\sigma$ and set of indices $S$, let $\sigma^{\oplus S}$ denote
$\sigma$ after flipping the bits in the locations in $S$;
we abuse notation for singleton sets $S$ and use
$\sigma^{\oplus j}$ in place of $\sigma^{\oplus \{j\}}$.
Our key construction will use the following folklore\footnote{This
construction is based on the solution to the so-called
``almost impossible chessboard puzzle''; interested readers
can find the setup and solution in videos on the YouTube channels
3Blue1Brown (\url{https://www.youtube.com/watch?v=wTJI_WuZSwE})
and Stand-up Maths (\url{https://www.youtube.com/watch?v=as7Gkm7Y7h4}).
It can also be seen as the syndrome of the Hamming code.} construction:
\begin{lemma} \label{lem:chessboard}
    For every $k$, there exists a mapping $\mem: \{0,1\}^{2^k}
    \rightarrow \{0,1\}^k$, computable in space $k+1$, such that
    the following holds: for any $\tau \in \{0,1\}^{2^k}$ and
    any $S \subseteq [k]$,
    $$\mem(\tau)^{\oplus S} = \mem(\tau^{\oplus \val(S)})$$
    where $\val(S) = \sum_{i \in S} 2^i$ is the value in $[2^k]$
    given by the characteristic vector of $S$.
\end{lemma}
Intuitively, Lemma~\ref{lem:chessboard} gives us an easily computable
mapping where any transformation of the $k$-bit output string
can be achieved by flipping one bit of the $2^k$-bit input string,
with the location of this single bitflip being determined only
by the locations where the current and target output strings differ.
\begin{proof}[Proof of Lemma~\ref{lem:chessboard}]
    For each $J \in [2^k]$, consider the bitstring $\bin(J) \in \{0,1\}^k$
    corresponding to the binary representation of $J$. We will define
    our mapping $\mem$ for each input $\tau \in \{0,1\}^{2^k}$ as
    $$\mem(\tau)_j = \bigoplus_{J \in [2^k]} \bin(J)_j \cdot \tau_J$$
    This is clearly computable in space $k+1$, as we need only store $J$
    and our current sum. Now note that for any $S$, flipping the value
    $\tau_{\val(S)}$ flips every $\mem(\tau)_j$ value where $\bin(\val(S))_j = 1$,
    which are exactly the values $j \in S$, and leaves all other $\mem(\tau)_j$
    values unchanged.
\end{proof}

At a high level, our $\LCSPACE[s + (\log e + \log c + 2),(e+1)c,e]$
machine $M_e$ works as follows.
$M_e$ will use its $s$ bits of free memory and the first $c$ bits of
catalytic memory to represent their equivalent blocks in $M_0$,
i.e. $s$ bits of free memory and $c$ bits of catalytic memory.
We will break the remaining $e \cdot c$ bits of our catalytic tape of
$M_e$ into $e$ blocks $B_1 \ldots B_e$ of size $c$ each, and we
denote by $\tau_i$ the memory in block $B_i$.
We apply Lemma~\ref{lem:chessboard} with $k = \log c$---recall that
we assume $c$ is a power of 2---and
use each $\mem_i := \mem(\tau_i)$ to represent an
additional $\log c$ bits of free memory, giving us
an additional $e \log c$ bits of memory in total.
Our additional workplace memory will be used to compute the
mapping, serve as pointers, and other assorted tasks.

Before formally stating $M_e$, we mention a special case of the
construction of Lemma~\ref{lem:chessboard},
which will allow us to use it in a reversible operational manner.

\begin{claim} \label{claim:revert-chessboard}
    Let $\tau_0, \tau_1, \tau_2 \ldots \tau_{t-1}, \tau_t \in \{0,1\}^{2^k}$
    be such that 1) $\tau_i$ and $\tau_{i-1}$ differ in exactly one coordinate for all
    $i \in [t]$; 2) $\mem(\tau_i)$ and $\mem(\tau_{i-1})$ differ in exactly one coordinate
    for all $i \in [t]$; and 3) $\mem(\tau_t) = \mem(\tau_0)$. Then $\tau_t = \tau_0$.
\end{claim}
\begin{proof}
    For all $i \in [t]$, let $b_i \in [k]$ be the location where
    $\mem(\tau_i)$ and $\mem(\tau_{i-1})$ differ, and let $\beta_i = 2^{b_i}$
    be the location where $\tau_i$ and $\tau_{i-1}$ differ.
    Since $\mem(\tau_t) = \mem(\tau_0)$, each value $j \in [k]$
    must appear an even number of times in the list $b_1 \ldots b_t$,
    and since flipping any location $j \in [k]$ can only be obtained
    by flipping a unique location $J \in [2^k]$, it follows that
    each value $J \in [2^k]$ must appear an even number of times
    in the list $\beta_1 \ldots \beta_t$. This means that $\tau_t$ is $\tau_0$ with
    each bit flipped an even number of times, or in other words $\tau_t = \tau_0$.
\end{proof}

We now concretely define our machine $M_e$:
\begin{enumerate}
    \item \textbf{Initialization:} for each block $B_i$,
    calculate $\mem_i$ and flip the $\mem_i$th element of $B_i$:
    \[
        \tau_i \rightarrow \tau_i^{\oplus \mem(\tau_i)} \qquad \forall i \in [e]
    \]
    Define $\tau_i^{enc}$ to be the memory after this process. Note that we now have
    exactly $e$ errors on the tape, one in each $\tau_i^{enc}$, and we are guaranteed that
    \[
        \mem(\tau_i^{enc}) = 0^{\log c} \qquad \forall i \in [e]
    \]
    
    \item \textbf{Simulation:} run $M_0$ using $s$ free work bits and $c$ catalytic bits,
    with the concatenation of the values $\mem_i$ as the other $e \log c$ free work bits.
    To do this, whenever we read or write a bit in our $e \log c$ bits of memory,
    we find the $B_i$ responsible for this bit, calculate $\mem_i$,
    and update $\tau_i$ using one bitflip to reflect how $\mem_i$ changes according to
    the operation of $M_0$:
    \[
        \tau_i^{enc} \rightarrow (\tau_i^{enc})^{\oplus 2^j} \qquad \mbox{update occurs in bit $j$ of block $i$}
    \]
    If $\mem_i$ is unchanged, we make no updates to $\tau_i$.
    
    \item \textbf{Cleanup:} when we reach the end of $M_0$'s computation,
    record the answer on the free work tape and set each $\mem_i$ value to
    $0^{\log c}$ one bit at a time:
    \[
        \tau_i^{enc} \rightarrow (\tau_i^{enc})^{\oplus 2^j} \qquad \forall i \in [e], j: \bin(\mem(\tau_i^{enc}))_j = 1
    \]
    Once we finish this process, we output our saved answer and halt.
\end{enumerate}
The correctness of $M_e$ is clear, as we output the same value as $M_0$.
Our catalytic space usage is $c + ec$ by construction, while our free space usage
is $s$ to simulate $M_0$, one extra bit to save the output, and
any additional space required to handle
the simulation of the additional $e \log c$ work bits.
In particular, we need $\log e$ bits for a pointer into $B_i$ and $\log c + 1$ bits
for the computation of $\mem_i$ by Lemma~\ref{lem:chessboard},
for a total space usage of $s + \log e + \log c + 2$ as claimed.

We also claim that our lossy catalytic condition is satisfied.
By the property of $M_0$, we make no errors on the $c$ catalytic bits
used for the simulation of $M_0$'s catalytic space.
The initialization step introduces at most one error per $\tau_i$, thus
giving at most $e$ errors on to the catalytic tape.
After the initialization step, each other update to $\tau_i$ corresponds
to changing a single bit in $\mem_i$, with the final value being the same
as the value after initialization. Thus by Claim~\ref{claim:revert-chessboard}
we restore the catalytic tape to its position after the
initialization phase, and so our end state corresponds to our original
catalytic tape with at most $e$ errors, i.e. those induced by the
initialization phase, as required.
\end{proof}

We now return to Theorem~\ref{thm:main-rev}, which requires only
a small modification of the above proof, namely to break the the catalytic tape into more, smaller blocks, which reduces its required length, at the cost of a few extra errors. This modification works because the number of pure bits represented is logarithmic in the length of the block, and so making the blocks smaller barely affects the number of bits represented (for example, $c/2$ bits in a block still lets you represent $\log(c) - 1$ bits, so half the size only loses one bit per block). 

\begin{proof}[Proof of Theorem~\ref{thm:main-rev}]
Let $M_0$ be a $\CSPACE[s + e \log c, c]$ machine.
We will devise a $\LCSPACE[s+\log c,(1+o(1))c,(1+\epsilon)e]$ machine
$M_e$ which simulates $M_0$, where $\epsilon$ satisfies
$e = o(c^{\epsilon/(1+\epsilon)})$.

An easy manipulation gives us $c = \omega(e^{1+1/\epsilon})$,
and so there exists a function $\delta = \omega(1)$ such
that $c \geq (\delta e)^{1+1/\epsilon}$.
We will have the same approach as Theorem~\ref{thm:main-rev-easy},
but now we use $(1+\epsilon) e$ blocks of length
$2^{\lceil \log c/(\delta e) \rceil}$ each, i.e., using
Lemma~\ref{lem:chessboard} for $k = \lceil \log c/(\delta e) \rceil$.
Since we introduce one error per block, the number
of errors the machine makes is at most $(1+\epsilon)e$,
while our total catalytic tape has length
$$c + (1+\epsilon) e \cdot 2^{\lceil \log c/(\delta e) \rceil}
\leq c + (1+\epsilon) e \cdot \frac{2c}{\delta e} = c \cdot (1+o(1))$$
Lastly, we can use our additional catalytic memory to
simulate a work tape of length
$$(1+\epsilon)e \cdot \log 2^{\lceil \log (c/\delta e)\rceil}
\geq e \cdot (1+\epsilon) \log \frac{c}{\delta e}$$
and by manipulating our starting assumption we get that
\begin{align*}
c & \geq (\delta e)^{1+1/\epsilon} \\
(c/\delta e)^{\epsilon} & \geq \delta e \\
\epsilon \log (c/\delta e) & \geq \log \delta e \\
\log c - \log \delta e + \epsilon \log (c/\delta e) & \geq \log c \\
e \cdot (1+\epsilon) \log (c/\delta e) & \geq e\log c
\end{align*}
thus giving us a simulation of $e \log c$ work bits as claimed.
The correctness and lossy catalytic condition can then be
argued as above, and our free space usage is $s$
plus an additional $\log (c \cdot o(1)) + 2 \leq \log c$ bits,
for a total space usage of $s + \log c$ as claimed. 
\end{proof}

% We close by mentioning that the modified proof in Theorem~\ref{thm:main-rev}
% can be used to show
% $\CSPACE[s + e \log c, c] \subseteq \LCSPACE[s+(1+o(1))\log c,O(c \log c),e]$
% for any $e \leq 0.99c$;\footnote{The case of $e = (1-o(1)c$ creates some asymptotic
% issues---the case of $e = c$ even causes some divide-by-zero errors---but this
% can easily be handled separately using $e \log c$ blocks of .}
% our only change is in analyzing the size of our catalytic tape
% based on the value of $\delta$ obtained.
% This is a strict improvement to Theorem~\ref{thm:main-rev-easy}, but
% as mentioned in Section~\ref{sec:intro}, the case of $e \approx c$
% is of minor interest compared to minimizing the parameter blowups
% in the connection between $\CSPACE$ and $\LCSPACE$.
\section{Further consequences}
\label{sec:corollaries}

With this, we have concluded our main theorem and proof. We now move
to corollaries and extensions.

\subsection{Lossy catalytic logspace with superconstant errors}
\label{sec:cl-err}

As stated in the introduction, it immediately follows from Theorem~\ref{thm:main}
that proving $\LCL[e] = \CL$ is likely difficult, if not false, for
superconstant values of $e$.

\begin{proof}[Proof of Corollary~\ref{cor:cl-for}]
    This follows immediately from the fact that
    \begin{align*}
        \LCSPACE[O(\log n),\poly n,e] & = \CSPACE[O(\log n + e \log(\poly n)), \poly n] \\
        & = \CSPACE[O(e \log n), \poly n] \\
        & \supseteq \SPACE[O(e \log n)]
    \end{align*}
    combined with the fact that $\CL \subseteq \ZPP$ 
    by~\cite{BuhrmanCleveKouckyLoffSpeelman14}.
\end{proof}

% \ian{See intro for context for next part. Unfinished, possibly untrue,
% but worth trying. Beginning of draft follows:}

% {\color{red}
% On the other hand, disproving $\LCL[e] = \CL$ for superconstant values of $e$
% is also likely to be difficult, as it would imply some separation between
% space and (zero-error randomized) time.

% \begin{proof}[Proof of Corollary~\ref{cor:cl-rev}]
%     It is sufficient to show that $\SPACE[e \log n] \subseteq \ZPP$ implies
%     $\LCL[e] \subseteq \ZPP$, as this would show that $\LCL[e] \neq \CL$
%     and $\SPACE[e \log n] \subseteq \ZPP$ implies
%     $\CL \subsetneq \LCL[e] \subseteq \ZPP$.
% \end{proof}
% }

% \quinten{maybe Ian got rid of this earlier:} \ian{Replaced with the paragraph before the corollary, which is essentially equivalent.} This recovers the result of people, which says that $\CL = \LCSPACE(\log n, \poly n, O(1))$. Furthermore, it gives intuition that this theorem is tight with respect to $\epsilon$; assuming no class $\SPACE(\omega(\log n))$ is contained in $\ZPP$, no class $\LCSPACE(\log n, \poly n, \omega(1))$ is equivalent to $\CL$.

\subsection{Lossy catalytic space with other resources}
\label{sec:resources}

As mentioned in Section~\ref{sec:intro}, there are many extensions of the
base catalytic model besides $\LCSPACE$, such as randomized, non-deterministic,
and non-uniform $\CSPACE$. So far, however, there has been little discussion
of classes where more than one such external resource has been utilized.
In this section, we will discuss two of the aforementioned models---randomized
and non-deterministic $\CSPACE$---in the presence of errors.

\begin{definition}
Let $f$ be a Boolean function on $n$ inputs.
\begin{itemize}
\item A \textit{non-deterministic catalytic Turing machine} for $f$
is a catalytic machine $M$ which, in addition to its usual input
$x$, has access to a read-once witness string $w$,
of length at most exponential in the total space
of $M$, based on $x$, which has the following properties:
\begin{itemize}
    \item if $f(x) = 1$, then there exists a witness $w$ such
    that $M(x,w) = 1$
    \item if $f(x) = 0$, then for every witness string $w$,
    $M(x,w) = 0$
\end{itemize}
Furthermore, for every witness string $w$, $M(x,w)$ restores
the catalytic tape to its original configuration.
\item A \textit{randomized catalytic Turing machine} for $f$
is a catalytic machine $M$ which, in addition to its usual input
$x$, has access to a read-once uniformly random string $r$,
of length at most exponential in the total space
of $M$, such that
$$\Pr_r[M(x,r) = f(x)] \geq 2/3 .$$
Furthermore, for every witness string $w$, $M(x,w)$ restores
the catalytic tape to its original configuration.
% \item An \textit{$m$-catalytic branching program} for $f$ is
% a directed acyclic graph $B$ with the following properties:
% \begin{itemize}
%     \item $B$ has $m$ source nodes, each labeled with
%     a unique element $i \in [m]$, and $2m$ sink nodes, each labeled with
%     a unique pair $(i,b) \in [m] \times \{0,1\}$.
%     \begin{itemize} 
%     \item For convenience, we will refer to the binary representation of $i \in [m]$ as the
%     ``catalytic tape'', and more generally for every internal node
%     in a program of size $S = 2^{s+m}$, we will arbitrarily assign it a label
%     in $S$ and refer to the $s+m$-bit binary representation of this label
%     as the assignment to the work and catalytic space respectively.
%     \end{itemize}
%     \item every non-sink node is labeled with an element of
%     $[n]$, and has two outgoing edges, labeled 0 and 1 respectively
% \end{itemize}
% On any input $x$, the computation $B(i,x)$ is given
% by the path starting at the source value labeled $i \in [m]$,
% and at each node labeled with $j \in [n]$ the path follows the
% outgoing edge labeled with the given input value $x_j \in \{0,1\}$.
% $B$ has the property that for every $i \in [m]$ and every input $x$,
% $B(i,x)$ ends at the sink node $(i,f(x))$.
\end{itemize}
We write
\begin{itemize}
    \item $\NCSPACE[s,c]$: the class of languages which can be recognized
    by non-deterministic catalytic Turing Machines with work space
    $s := s(n)$ and catalytic space $c := c(n)$.
    \item $\BPCSPACE[s,c]$: the class of languages which can be recognized
    by randomized catalytic Turing Machines with work space
    $s := s(n)$ and catalytic space $c := c(n)$.
    % \item $\CBP[s,c]$: the class of languages which can be recognized by
    % a $2^c$-catalytic branching program of total size $2^{s + c}$.
\end{itemize}
Furthermore, for $\mathcal{C} \in \{\NCSPACE,\BPCSPACE\}$,
we define $\LC[s,c,e]$ to be the class $\mathcal{C}[s,c]$ with the catalytic
resetting definition replaced by the $\LCSPACE$ resetting definition, i.e.,
where $e$ errors are allowed to remain on the catalytic tape at the end
of any computation.
\end{definition}

Note that we allow the errors to depend on the witness and randomness, respectively.
Without delving too deep into these models, however, it
is clear that our proof of Theorem~\ref{thm:main} carries
over to all the above definitions.

\begin{proof}[Proof sketch of Theorem~\ref{thm:main-resources}]
    As earlier, we need to show both directions. We will prove the same two equivalences
    as in Theorems~\ref{thm:main-for} and \ref{thm:main-rev}, namely
    \begin{enumerate}
        \item $\LC[s,c,e] \subseteq \mathcal{C}[s + O(e \log c), c]$
        \item $\mathcal{C}[s + e \log c,c] \subseteq \LC[s + \log c,(1+o(1))c,(1+\epsilon)e]$
    \end{enumerate}
    Both directions will follow immediately via the same simulation as before.
    For the forward direction, we simulate our $\LC$ machine by a $\mathcal{C}$ machine
    as usual, and then at the last step we correct the changes using our BCH code as
    before; this works just as before because the code allow us to correct any $e$ errors
    on the catalytic tape, regardless of how they came about.

    For the reverse direction, again our $\mathcal{C}$ machine will only
    read or write one bit per time step, and so we use the same $mem_i$ approach to
    simulating our additional $e \log c$ bits of free memory, which does
    not change based on the operation of the rest of the machine. As in the
    forward direction, simulating the actual workings of the $\mathcal{C}$ machine
    via the $\LC$ machine, given our method of simulating the $e \log c$ additional
    bits of memory, is straightforward, and our resetting step at the end again
    resets our extra catalytic tape regardless of the computation path the $\mathcal{C}$
    machine takes.
\end{proof}

%-------------------------

\section*{Acknowledgments}

We acknowledge useful conversations with Swastik Kopparty and Geoffrey Mon about error correction codes and how to apply them.

\DeclareUrlCommand{\Doi}{\urlstyle{sf}}
\renewcommand{\path}[1]{\small\Doi{#1}}
\renewcommand{\url}[1]{\href{#1}{\small\Doi{#1}}}
\bibliographystyle{alphaurl}
\bibliography{bibliography}

%-------------------------

\appendix
\section{Simulating errors with space via reversibility}
\label{sec:reverse}

In this section we give an alternate proof of simulating $\LCSPACE$
via $\CSPACE$, with sharper parameters than those in Theorem~\ref{thm:main-for}.

\begin{theorem}\label{thm:error-via-reversing}
    Let $s := s(n), c := c(n), e := e(n)$. Then
    $$\LCSPACE[s,c,e] \subseteq \CSPACE[s + (e+1) \log c,c] .$$
\end{theorem}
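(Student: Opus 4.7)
The plan is to prove Theorem~\ref{thm:error-via-reversing} via a reversibility argument rather than error correction, saving the constant factor in front of $e\log c$ and matching the $(e+1)\log c$ bound. The approach simplifies and extends the argument of \cite{GuptaJainSharmaTewari24} from constant $e$ to arbitrary $e$.

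The core idea is to simulate $M_e$ by a \emph{reversible} catalytic machine $R$, namely one whose transition function is injective on configurations. If $R$ can be built with only $(e+1)\log c - 1$ additional free-work bits beyond $M_e$'s $s$ bits, then the whole simulation fits into our budget: the $\CSPACE$ simulator $M_0$ runs $R$ forward on $(x,\tau)$ to obtain answer $a$ and final configuration, XORs $a$ onto a designated output bit, and then runs $R$ backward from the final configuration. Reversibility of $R$ forces the backward phase to undo the forward phase step by step, restoring the catalytic tape to $\tau$ exactly and clearing the free work tape except for the saved answer bit, giving the desired $\CSPACE[s + (e+1)\log c, c]$ simulation. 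Injectivity of $R$ also forces the induced map $\tau \mapsto \tau'$ on catalytic tapes to be injective, so the ``correct'' $\tau$ is uniquely recoverable from the final configuration.

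The heart of the argument is constructing $R$ within the claimed overhead. A black-box application of Lange--McKenzie--Tapp would incur $\Theta(s+c)$ overhead, so I would instead leverage the specific structure of $\LCSPACE$: since $M_e$ already preserves all but at most $e$ catalytic-tape bits, and can be assumed WLOG to leave the free work tape in a canonical configuration, the ``irreversibility'' of $M_e$ relative to the identity map is concentrated in at most $e$ bit-flip positions on the catalytic tape. The reversibility log therefore only needs to record those positions, costing $e \log c$ bits, plus a $\log c$-bit counter used to iterate over them during the backward pass.

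The main obstacle I expect is showing that this compressed log genuinely suffices to reverse the \emph{entire} computation, not merely the final $e$ error flips. The intermediate writes to the catalytic tape, though self-undoing in aggregate under the catalytic guarantee, still need to be reversed in the correct order; carrying this out with only $(e+1)\log c$ state requires a careful pebbling-style construction where the backward simulation is driven by the error-position log together with the machine's own deterministic transition rules. Adapting the GJST constant-$e$ construction to scale gracefully to arbitrary $e$ and $c$---keeping the log size exactly $(e+1)\log c$ rather than incurring any polynomial blowup---is the key technical step I would need to make rigorous.
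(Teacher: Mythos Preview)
Your proposal has the right overall shape---reversibility is indeed the mechanism---but it misidentifies where the work lies and consequently misses the actual argument.

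First, your premise that ``a black-box application of Lange--McKenzie--Tapp would incur $\Theta(s+c)$ overhead'' is the source of the trouble. The paper invokes Lemma~\ref{lem:rev} (due to Dulek and extended by \cite{GuptaJainSharmaTewari24}): any $\LCSPACE[s,c,e]$ machine can be made reversible with \emph{no} additional space. So there is no need to build a custom reversible simulation $R$ with a compressed error-position log; reversibility comes for free, and your ``main obstacle'' simply does not arise.

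Second, once $M_e$ is reversible, the remaining issue is not reversing the computation per se but knowing \emph{where to stop} when running backward. Many distinct start states $\start(\tau_i)$ can lie on the same reversible path leading to a given halt state $(\tau,v)$; running $M_\leftarrow$ from $(\tau,v)$ will visit all of them, and you must recognize which one is yours. The paper's key observation is a counting bound: every such $\tau_i$ satisfies $d(\tau_i,\tau)\le e$, so $|S_{x,(\tau,v)}|\le\binom{c}{\le e}\le c^{e+1}/2$. Thus a counter of $(e+1)\log c - 1$ bits suffices: increment it at every start state encountered on the forward run, then decrement on the backward run and halt when it reaches zero at a start state. That counter, plus one bit for the answer, is the entire overhead.

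Your error-position log idea does not obviously work even as an alternative: during the forward run you do not know which catalytic positions will end up as the final errors, and intermediate catalytic-tape states can be arbitrarily far from $\tau$, so an $e\log c$-bit log cannot record enough to drive a step-by-step reversal. The paper sidesteps all of this by using free reversibility plus the start-state counter.
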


For this proof, we need to invoke a property of space-bounded machines
known as \textit{reversibility}, which we define now.

\begin{definition}
    A Turing machine $M$ is (strongly) \textit{reversible}
    if the following conditions hold:
    \begin{enumerate}
        \item For any input $x$, every node $v$ in its
        configuration graph $G_x$ has both in-degree and
        out-degree at most one. Let $for_x(v)$ indicate the
        unique node with a directed edge $(v,for_x(v))$,
        and let $back_x(v)$ indicate the unique node with a
        directed edge $(back_x(v),v)$.
        \item There exist machines $M_{\rightarrow}$ and
        $M_{\leftarrow}$ such that for every node $v$ in the
        configuration graph of $M$, $M_{\rightarrow}(x)$
        sends $v$ to $for_x(v)$ and $M_{\leftarrow}$ sends
        $v$ to $back_x(v)$.
    \end{enumerate}
\end{definition}

A classical result of Lange, McKenzie, and Tapp~\cite{LangeMckenzieTapp00}
shows that every $\SPACE[s]$ machine can be made reversible with no additional
space. Dulek~\cite{Dulek15} showed the same result for catalytic machines, while
Gupta et al.~\cite{GuptaJainSharmaTewari24} extended this to
catalytic machines with error; both of the latter results use a very
similar Eulerian tour argument to \cite{LangeMckenzieTapp00}.

\begin{lemma} \label{lem:rev}
    Let $M$ be a $\CSPACE[s,c]$ (resp.\ $\LCSPACE[s,c,e]$) machine recognizing language $L$.
    Then there exists a reversible $\CSPACE[s,c]$ (resp.\ $\LCSPACE[s,c,e]$) machine
    $M'$ which recognizes $L$.
\end{lemma}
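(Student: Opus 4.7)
The plan is to adapt the Lange-McKenzie-Tapp reversibility construction to the catalytic setting, following the cited works of Dulek and Gupta et al. The starting observation is that the configuration graph $G_x$ of $M$ on input $x$ already has out-degree at most one at every vertex (by determinism), so the only obstruction to reversibility is that some vertices may have in-degree greater than one. The goal is to modify $M$ so that, on every reachable configuration, the unique predecessor can be locally recovered in $O(1)$ additional space.

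The LMT technique handles this as follows. Because a Turing machine transition changes only a constant amount of information, any configuration $v$ has at most $O(1)$ syntactic candidates for its predecessor, obtained by reverting the last head motion and restoring the affected tape cells in all allowed ways. The machine can enumerate these candidates and, for each, simulate one forward step of $M$ to check which candidate maps to $v$. Standard syntactic transformations of the transition function (tagging states so that each transition is used by only one incoming edge) ensure that exactly one candidate is valid, making the back-step function well-defined and computable in $O(1)$ space. This yields a reversible machine $M'$ that step-for-step simulates $M$, using the original transition as $M_{\rightarrow}$ and the newly defined back-step as $M_{\leftarrow}$. The Eulerian tour perspective enters when one wants a complete reversible computation that also cleans up intermediate work: $M'$ traces the computation forward to an accepting/rejecting state, commits the answer to a designated output bit, and then retraces in reverse, effectively walking an Eulerian tour of the computation path.

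For the standard catalytic case, applying this construction directly to a $\CSPACE[s,c]$ machine yields a reversible $\CSPACE[s,c]$ machine: the catalytic tape is part of the configuration, so the reverse simulation restores it exactly as $M$ would. For the lossy case, the same construction applies because $M$'s final configuration differs from the initial one on at most $e$ catalytic bits, and $M'$ reaches precisely that same final configuration; the Hamming-distance bound on the catalytic tape is therefore inherited by $M'$, which uses the same $s$ bits of free space and $c$ bits of catalytic space.

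The main obstacle is to verify that the $O(1)$-space back-step computation does not accidentally corrupt the catalytic tape, since the LMT analysis was originally stated only for ordinary space. This is resolved by observing that the predecessor computation is purely local: it reads only the current head positions, internal state, and a constant-sized window of surrounding tape cells, without modifying them outside of what $M$ itself does during its own forward or backward step. Consequently, the local LMT transformation commutes with the global catalytic (or lossy catalytic) restoration condition, and no further modification is needed.
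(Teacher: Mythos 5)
There is a genuine gap, and it sits at the heart of your second paragraph. You claim that after a syntactic transformation (``tagging states so that each transition is used by only one incoming edge'') exactly one candidate predecessor is valid, so that a back-step function is well-defined in $O(1)$ space and $M'$ can simulate $M$ step-for-step. This is precisely what is impossible in general: in-degree greater than one in the configuration graph reflects genuine information erasure (e.g.\ a step that overwrites a cell whose previous content could have been $0$ or $1$), and no relabeling of states or constant-size local window can decide which in-neighbor lies on the actual computation path from the given start configuration --- that depends on the entire history. If such a local fix existed, the Lange--McKenzie--Tapp theorem would be trivial and no Eulerian tour would ever be needed. Your later description of the tour as ``trace forward, commit the answer, then retrace in reverse'' fails for the same reason: deterministically retracing the path backwards is exactly the operation that an irreversible machine does not support. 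The actual LMT argument (and its catalytic adaptations) never follows $M$'s computation path backwards at all: it fixes a canonical cyclic ordering of the $O(1)$ candidate neighbors of each configuration and has $M'$ walk an Euler tour of the tree of the configuration graph containing the start configuration; each step of this \emph{tour} is locally invertible even though $M$'s steps are not, and the tour eventually visits the halting configuration, from which the answer is read off.

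Note also that the paper does not reprove this lemma; it imports it, citing Lange--McKenzie--Tapp for ordinary space, Dulek for $\CSPACE[s,c]$, and Gupta et al.\ for $\LCSPACE[s,c,e]$, all via the Euler-tour argument. The genuinely catalytic content of those adaptations --- which your sketch waves off with ``the local transformation commutes with the restoration condition'' --- is to check that the tour can be carried out within the same space bounds when the configuration graph has a separate component for (essentially) each catalytic tape content, that the tour's bookkeeping fits in the free tape, and that upon halting the catalytic tape is restored exactly (resp.\ up to $e$ errors, for every witness of the lossy machine). A corrected write-up should either cite these results as the paper does or faithfully reproduce the Euler-tour traversal; the step-for-step local-inversion simulation you propose does not exist.
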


In light of Lemma~\ref{lem:rev}, it seems that there is nothing interesting
to be said about $\LCSPACE$; after all, can we not simply reverse our machine
to the starting point, wherein there are no errors on the catalytic tape?
While this is technically true, there may be many different starting configurations
which reach the same halting state $(\tau,v)$. All such start states can and will
be reached by running $M_{\leftarrow}$ from $(\tau,v)$ for long enough,
but without knowledge of which particular start state
we began with, this na\"{i}ve reversing procedure cannot reset our
catalytic tape free of error.

Nevertheless, a small tweak on this idea,
using our additional $(e+1) \log c$ bits, immediately works.

\begin{proof}[Proof of Theorem~\ref{thm:error-via-reversing}]
Let $M_e$ be a  $\LCSPACE[s,c,e]$ machine, and by Lemma~\ref{lem:rev}
we will assume $M_e$ is reversible.
We will devise a $\CSPACE[s + (e+1) \log c, c]$ machine $M_0$ which simulates $M_e$.

We will assume without loss of generality that all start and end states of
a catalytic machine $M$ are distinguished; for example, we traditionally
assume any state with an all-zeroes work tape is a start state.
We write $\start(\tau)$ to
indicate the unique start state of $M$ with initial catalytic tape $\tau$,
while we write $\halt_x(\tau)$ to indicate the unique end state
reached by $M$ from initial state $\start(\tau)$ on input $x$.

Now let $S_{x,(\tau,v)} := \{\start(\tau_i)\}_{i}$ be the set of start states
such that $\halt_x(\tau_i) = (\tau,v)$.
Since $M_e$ is an $\LCSPACE[s+\log c,c,e]$
machine, each $\tau_i$ can differ from $\tau$ in at most $e$ locations, and thus
$$|S_{x,(\tau,v)}| \leq {c \choose \leq e} \leq \frac{c^{e+1}}{2}$$
Our machine $M_0$ thus works as follows:
\begin{enumerate}
    \item initialize a counter $num\_start$ with $\log {c \choose \leq e}$ bits to 0
    \item simulate $M_e$ using $s \log c$ work bits and $c$ catalytic bits,
    incrementing $num\_start$ each time we encounter a start state
    $\start(\tau_i)$, until we reach an end state $(\tau,v)$
    \item record our answer and run $M_e$ in reverse,
    decrementing $num\_start$ each time we encounter a start state
    \item halt when we reach a start state and $num\_start = 0$, and
    return our recorded answer
\end{enumerate}
Clearly our algorithm outputs the correct answer, resets the catalytic tape exactly, and uses at most $s + 1 + (e + 1)\log c - 1$ bits of work memory plus $c$ bits of catalytic memory.
\end{proof}

We also note that reversibility can be used for Theorem~\ref{thm:main-rev},
taking the place of the one-bit write condition.

\begin{proof}[Proof sketch of Theorem~\ref{thm:main-rev}]
We will construct an $\LCSPACE$ machine $M_e$ to simulate our
$\CSPACE$ machine $M_0$. By Lemma~\ref{lem:rev} we will assume
$M_0$ is reversible.
We will act as in the old proof of Theorem~\ref{thm:main-rev},
but instead of relying on Claim~\ref{claim:revert-chessboard},
i.e. resetting via flipping each bit an even number of times, we will
simply run $M_0$ backwards at the end of our simulation.
More formally our machine $M_e$ works as follows:
\begin{enumerate}
    \item initialization: for each block $B_i$, calculate $mem_i$ and flip the $mem_i$th element of $B_i$. Note that after this procedure, each $mem_i$ is the string $0^{\log c}$ and we have exactly $e$ errors on the tape.
    \item simulate $M_0$ using $s$ free work bits and $c$ catalytic bits, with the concatenation of the values $mem_i$ as the other $e \log c$ free work bits. To do this, whenever we read or write to memory in $B_i$, we calculate the $mem_i$ value containing this bit and update $mem_i$ according to the operation of $M_0$, using one bitflip to $\tau_i$ if $mem_i$ is changed and doing nothing otherwise. 
    \item when we reach the end of $M_0$'s computation, we record our answer on the free work tape and run $M_0$ in reverse, outputting our saved answer when we finish.
\end{enumerate}

By the construction in Lemma~\ref{lem:chessboard}, it is clear that
if we flip a bit $b$ in $\tau$ to transform $mem(\tau)$ into $mem(\tau)'$,
flipping the same bit $b$ causes us to transform back into $mem(\tau)$.
Therefore, the reversing procedure exactly resets the catalytic tape
used to simulate our $e \log c$ bits of memory as before.
\end{proof}

We defer these discussions to the appendix for two reasons.
First, the error-correcting approach more directly
applies in both directions of Theorem~\ref{thm:main};
while Lemma~\ref{lem:rev} connects to Theorem~\ref{thm:main-rev},
the Hamming code connection via Lemma~\ref{lem:chessboard} is
still the driving force behind the construction.
Second, the reliance on reversibility makes the proof unsuitable
to our later generalizations from Section~\ref{sec:resources};
in particular, both randomized and non-deterministic catalytic
computations are only reversible in a limited sense, one which
rules out using Lemma~\ref{lem:rev}. However, as discussed in
Section~\ref{sec:corollaries}, if some other model,
such as the catalytic branching program model, is amenable
to reversing, these proofs may provide a direct way to extend Theorem~\ref{thm:main}.
\section{Space Efficient Linear Algebra on Finite Fields}
\label{sec:arithmetic}

\subsection{The Space Complexity of Solving Linear Systems}

We prove the space efficiency of various common arithmetic and linear algebra operations necessary in order to encode and decode BCH codes. First, we introduce the concept of well-endowed rings \cite{BorodinCookPippenger83}. This allows us to use earlier results to argue about the efficiency of various operations on rings without having to reprove those ourselves. The fields of interests are fields of the form $GF(p^{r_n})$ for a fixed prime $p$ and a sequence $r_n$. Our results will apply to a field whose size increases asymptotically. Hence the uniformity of the calculations involved is important. But we assume that $p$ is fixed for all fields.

All these results are expressed in their asymptotic complexity in terms of the size of the ring or a length function, which may be seen as a measure of the number of bits necessary to write down a value in a ring.

\begin{definition}
    A length function $\rho$ for a ring $R$ is a function satisfying that for any $x, y \in R$
    \begin{enumerate}
        \item $\rho(x + y) \leq \max \{\rho(x), \rho(y)\} + O(1)$
        \item $\rho(xy) \leq \rho(x) + \rho(y) + O(\log \max \{\rho(x), \rho(y)\})$
    \end{enumerate}
    An example is the number of bits of an integer.
\end{definition}

From here we can define well-endowed rings as those with efficient implementations of addition, negation and multiplication.

\begin{definition}
    A ring $R$ with length function $\rho$ is \textit{well endowed} if there is a succinct uniform representation in which it has efficient implementations of addition, negation and multiplication. Addition and negation are considered efficient if they can be implemented in (logspace uniform) $\mathsf{NC}^0$ and multiplication is considered efficient if it can be implemented in (logspace uniform) $\mathsf{NC}^1$. The parameter for $\mathsf{NC}^1$ functions is always the length function of the ring.
\end{definition}

We now argue that basic arithmetic can be done space efficiently. This is done in the following steps. First, we argue that the polynomial ring $GF(\xi)$ is well endowed and therefore we can perform polynomial addition, negation and multiplication efficiently. Then we argue that we can use this to compute the remainder of polynomial division efficiently. This allows to find an irreducible polynomial to represent the field $GF(p^{r_n})$ in order to perform addition, negation and multiplication efficiently. We finally show that we can evaluate multiplicative inverses inefficiently and use this to do division. With inefficiently we mean in space $O(\log |F|)$ for a field $F$ whereas addition, negation and multiplication can be performed in space $O(\log \log |F|)$. 

\begin{lemma}
    \label{lemma:well_endowed_polys}
    For fixed $p$, the ring $GF(p)[\xi]$ is well endowed.
\end{lemma}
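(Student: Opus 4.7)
The plan is to verify the three defining requirements of a well-endowed ring for $GF(p)[\xi]$ with $p$ fixed: a succinct uniform representation with an admissible length function, addition and negation in logspace-uniform $\mathsf{NC}^0$, and multiplication in logspace-uniform $\mathsf{NC}^1$. I represent a polynomial $f = \sum_{i=0}^{d} f_i \xi^i$ by its coefficient vector $(f_0, \ldots, f_d)$, with each $f_i \in \{0, 1, \ldots, p-1\}$ encoded in $\lceil \log_2 p \rceil = O(1)$ bits. The natural length function $\rho(f) := (\deg(f)+1)\cdot\lceil \log_2 p\rceil = \Theta(\deg(f))$ satisfies both length-function axioms, using $\deg(f+g) \leq \max\{\deg f, \deg g\}$ and $\deg(fg) = \deg f + \deg g$; in fact the $O(\log)$ slack in the product axiom is not even needed. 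The representation is uniform in $d$ in the obvious way.

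For addition and negation, the operations are coefficient-wise: the $k$-th coefficient of $f+g$ depends only on $(f_k, g_k)$, and similarly for $-f$. Because $p$ is a fixed constant, each of the maps $(a,b) \mapsto a+b \bmod p$ and $a \mapsto -a \bmod p$ is a fixed Boolean function on $O(1)$ inputs, realized by a constant-depth fan-in-bounded circuit whose wiring is trivial to describe in logspace. This puts addition and negation in logspace-uniform $\mathsf{NC}^0$.

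For multiplication, the $k$-th coefficient of $fg$ is $(fg)_k = \sum_{i+j=k} f_i g_j \bmod p$. Each product $f_i g_j$ is a constant-size function of its inputs and so lies in $\mathsf{NC}^0$. The remaining work is, for each $k$, an iterated sum modulo $p$ of at most $N = \Theta(\rho(f)+\rho(g))$ elements of $GF(p)$. Since $p$ is fixed, this is handled by a balanced binary tree of depth $O(\log N)$ whose internal gates are each a constant-size mod-$p$ adder; laying one such tree over each output coefficient $k$ yields a circuit of polynomial size and depth $O(\log N)$, i.e.\ an $\mathsf{NC}^1$ circuit computing the full product.

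The one step that actually requires care is logspace-uniformity of the multiplication circuit, rather than any new arithmetic content. Given the index of a gate, the uniformity machine must output its type, its children, and its position. Since gates can be indexed by a pair $(k, t)$, where $k \in \{0, \ldots, \deg f + \deg g\}$ is the target output coefficient and $t$ is a node of the perfectly balanced tree computing $(fg)_k$, all bookkeeping involves indices of bit-length $O(\log(\deg f + \deg g)) = O(\log \rho(f) + \log \rho(g))$ and is straightforward to produce in logspace. I expect this uniformity check to be the only nontrivial part of the argument, everything else being immediate from the fact that $p$ is a fixed constant.
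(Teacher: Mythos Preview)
Your argument is correct. The paper, however, takes a shorter path: it observes that $GF(p)$ is trivially well endowed (the ring has constant size, so addition, negation, and multiplication are all $\mathsf{NC}^0$), and then invokes Proposition~3.9 of Borodin--Cook--Pippenger, which states that if $R$ is well endowed then so is $R[\xi]$. What you have written is essentially a from-scratch proof of that proposition specialized to $R = GF(p)$: the coefficient-wise addition/negation and the balanced-tree iterated addition for multiplication are exactly the content of the BCP closure result in this case. Your version is more self-contained and makes the uniformity visible, while the paper's version is a two-line appeal to existing machinery; neither route has any real advantage in strength.
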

\begin{proof}
    We argue that finite fields are well-endowed rings. First observe that $GF(p)$ for a fixed~$p$ is always well endowed since the size of the ring is independent of $n$ so addition, negation and multiplication can be performed in $\mathsf{NC}^0$. By Proposition 3.9 from \cite{BorodinCookPippenger83} this means that polynomials over $GF(p)[\xi]$ are also well endowed. The length function here is $O(d 
)$ for a polynomial of degree $d$. Since they are well endowed, one can perform addition, negation and multiplication in space $O(\log d)$ for polynomials.
\end{proof}

We use this to compute the remainder.
\begin{lemma}
    \label{lemma:remainder}
    Given polynomials $N(\xi)$ and $D(\xi)$ in $GF(2)[\xi]$ of degree at most $r_n$, it is possible to compute the remainder $R(\xi)$ using an additional $r_n  + O(\log r_n)$ space. If we can overwrite $N(\xi)$ in place, the additional space necessary is $\lceil \log r_n \rceil + O(1)$. 
\end{lemma}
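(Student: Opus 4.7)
The plan is to carry out classical polynomial long division, which in $GF(2)[\xi]$ reduces to a sequence of bitwise XOR operations since addition and subtraction coincide in characteristic~$2$.

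For the in-place case, I would maintain a single index $i$ that scans from $r_n$ downwards. At each value of $i \geq \deg(D)$, inspect the coefficient $N[i]$; if it equals $1$, perform the update $N \gets N \oplus \xi^{i - \deg(D)} D(\xi)$, which amounts to flipping $N[i - \deg(D) + j]$ whenever $D[j] = 1$, for $j = 0, 1, \ldots, \deg(D)$. Because $D[\deg(D)] = 1$, this operation zeros out position $i$ while only touching strictly lower positions, so the invariant ``all bits above $i$ are zero'' is preserved across the sweep. When $i$ drops below $\deg(D)$, the surviving bits of $N$ are exactly the coefficients of the desired remainder $R(\xi)$.

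The main workspace is the counter $i$, which fits in $\lceil \log r_n \rceil$ bits. The inner XOR step additionally needs to walk an offset $j$, but by organizing the sub-procedure as an in-place decrement-then-restore of $i$ (walking $i$ down by $\deg(D)+1$ positions during the XOR and then restoring it by reading $D$'s length from the read-only input), one can reuse the same register for both the outer index and the inner offset, giving $\lceil \log r_n \rceil + O(1)$ bits of additional storage. For the non-in-place case, $N$ is treated as read-only input, so I would first copy $N$ into a work-tape buffer of length $r_n$ and then run the in-place procedure on the buffer; the total overhead becomes $r_n + \lceil \log r_n \rceil + O(1) = r_n + O(\log r_n)$, matching the first claim.

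The main obstacle is the careful bookkeeping around the single-register trick: one must verify both that the XOR sub-procedure never modifies positions above $i$ (which follows from the alignment $i - \deg(D)$ and the bound $\deg(D) \leq r_n$) and that the outer sweep resumes at position $i-1$ after each sub-procedure. Once these invariants are spelled out, correctness of long division and the stated space bounds follow directly from the well-endowed structure of $GF(2)[\xi]$ established in Lemma~\ref{lemma:well_endowed_polys}, which supplies the underlying bitwise primitives uniformly in logarithmic space.
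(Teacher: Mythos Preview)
Your proposal is correct and follows essentially the same approach as the paper: both perform polynomial long division by repeatedly subtracting (in $GF(2)$, XORing) an appropriately shifted copy of $D$ from $N$ until the leading terms vanish, tracking position with a logarithmic-size counter and, in the non-in-place case, first copying $N$ into an $r_n$-bit buffer. The paper's write-up is terser and does not spell out your single-register reuse trick, but the underlying algorithm and the space accounting are the same.
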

\begin{proof}
    Suppose that $\chi \in GF(p)$ is the leading coefficient of $D(x)$, we can compute and store $\chi^{-1}$ in constant space since $2$ is constant. We perform a kind of Gaussian elimination to compute the remainder:
    \begin{enumerate}
        \item If the degree of $D(\xi)$ exceeds the degree of $N(\xi)$ then return $N(\xi)$.
        \item Let $\psi$ be the leading coefficient of $N(\xi)$. Compute $N(\xi) + -\psi \chi^{-1} D(\xi)$ overwriting $N(\xi)$ in the process. 
            Since we use fixed field $GF(2)$, this can be done in constant depth.
            Repeating this for each coefficient uses $O(\log r_n)$ space for a counter. We use $O(1)$ to store $\psi$ and the coefficient of $N(\xi)$ during the computation. We then compute $N(\xi) + - \psi^{-1} D(\xi)$ coefficient by coefficient. 
        \item return to step 1.
    \end{enumerate}
    Overall, we manage to compute the remainder in space $r_n + O(1)$ by copying the final remainder to a new part of the space and then updating it in place during every iteration. If we can overwrite $N(\xi)$ in the process, then the additional space required is only $r_n$ to keep track of a counter. 
\end{proof}

We can now search for irreducible polynomials.
\begin{lemma}
    \label{lemma:irred_polys}
    Given a sequence of positive integers $r_n$ and a constant prime $p$, it is possible to find a degree $r_n$ irreducible polynomial in $GF(p)[\xi]$ in space $3 r_n + O(\log r_n)$. 
\end{lemma}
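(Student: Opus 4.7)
The plan is to enumerate monic polynomials of degree $r_n$ in $GF(p)[\xi]$ in lexicographic order and output the first one that passes a space-efficient irreducibility test. A monic degree-$r_n$ polynomial is determined by its $r_n$ lower-order coefficients in $GF(p)$, so enumeration uses $r_n$ space (since $p$ is a fixed constant, each coefficient takes $O(1)$ bits). By the classical count of monic irreducibles $\tfrac{1}{r_n}\sum_{d \mid r_n}\mu(d) p^{r_n/d} > 0$, at least one monic irreducible of degree $r_n$ exists, so the enumeration terminates.

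For the test I would use Rabin's criterion: $f$ of degree $r_n$ is irreducible iff $\xi^{p^{r_n}} \equiv \xi \pmod{f}$ and $\gcd(\xi^{p^{r_n/q}} - \xi, f) = 1$ for every prime divisor $q$ of $r_n$. The prime divisors of $r_n$ can be enumerated by trial division in $O(\log r_n)$ space. To compute each $\xi^{p^k} \bmod f$, maintain a single register $g$, initialize $g \leftarrow \xi$, and apply the Frobenius $g \leftarrow g^p \bmod f$ for $k$ iterations, tracking $k$ in an $O(\log r_n)$ counter.

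The crucial step is to perform one Frobenius step in $2 r_n + O(\log r_n)$ space beyond the $r_n$ holding $f$. Writing $g = \sum_{j=0}^{r_n - 1} g_j \xi^j$ with $g_j \in GF(p)$, the characteristic-$p$ identity gives $g^p = \sum_j g_j \xi^{jp}$. I would evaluate this by Horner's rule: allocate a second register $h$ of size $r_n$, initialize $h \leftarrow 0$, and for $j = r_n - 1$ down to $0$ perform $h \leftarrow (h \cdot \xi^p \bmod f) + g_j$. The multiplication by the fixed polynomial $\xi^p$ is a $p$-step left-shift on $h$, and the subsequent reduction mod $f$ keeps $h$ of degree $< r_n$; since $p$ is constant the intermediate polynomial has degree $< r_n + p$, so by Lemma~\ref{lemma:remainder} both operations can be carried out in place on $h$ using only $O(\log r_n)$ workspace. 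Once $h$ holds $g^p \bmod f$, copy $h$ back into $g$ and continue. This uses $r_n$ bits for $f$, $r_n$ for $g$, $r_n$ for $h$, and $O(\log r_n)$ for counters, totaling $3 r_n + O(\log r_n)$.

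Finally, once $g = \xi^{p^{r_n/q}} \bmod f$ has been computed, I would run the Euclidean algorithm on $f$ and $g - \xi$ to check that their gcd equals $1$, reusing the $g$ and $h$ registers for the two running polynomials and invoking Lemma~\ref{lemma:remainder} for each remainder step; each division shrinks the degree of one operand, so the algorithm terminates within the same $3 r_n + O(\log r_n)$ budget. The main obstacle, and the reason this bookkeeping is nontrivial, is the in-place accounting: the budget cannot afford to first form $g^p$ (degree $p(r_n-1)$) and then reduce modulo $f$, so it is essential to interleave the multiplication and the reduction via Horner's rule, exploiting that the Frobenius on a $GF(p)$-polynomial acts purely by substituting $\xi \mapsto \xi^p$.
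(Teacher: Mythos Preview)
Your argument is correct and meets the $3r_n + O(\log r_n)$ budget, but it takes a different route from the paper. The paper's proof is the most naive possible search: store a candidate $f$ of degree $r_n$ in one register, a candidate proper factor $g$ in a second register, copy $f$ into a third register and reduce it modulo $g$ in place via Lemma~\ref{lemma:remainder}; declare $f$ irreducible when no $g$ of degree between $1$ and $r_n-1$ divides it. That is three registers of size $r_n$ plus $O(\log r_n)$ for the in-place remainder, matching the bound with essentially no further thought.

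You instead run Rabin's irreducibility test, computing $\xi^{p^k}\bmod f$ by iterated Frobenius and then a Euclidean gcd. Your key observation---that the Frobenius step $g\mapsto g^p\bmod f$ can be done with one extra degree-$r_n$ register by writing $g^p=\sum_j g_j(\xi^p)^j$ and evaluating via Horner with reduction after each multiply-by-$\xi^p$---is what makes the space bound go through, and it is genuinely more delicate than anything in the paper's proof. What your approach buys is time: the paper's trial division runs in time exponential in $r_n$ (it enumerates all $p^{r_n}$ potential factors for each candidate), whereas your test is polynomial in $r_n$ per candidate. Since the lemma only asserts a space bound, the paper is content with the exponential-time brute force, which keeps the argument to three sentences; your version would be the right one if time mattered.
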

\begin{proof}
    It costs $d $ space to store a polynomial over $GF(p)$ of degree at most $d$. Therefore, one can iterate over all such polynomials. If we store two such polynomials and iterate over all pairs, the first can be a candidate irreducible polynomial, while the second can be a candidate factor of the first polynomial. By using Lemma \ref{lemma:remainder} to test whether or not the candidate irreducible polynomial is divisible by the candidate factor in additional space $r_n + O(\log r_n)$, we can test if the candidate irreducible polynomial is irreducible. This uses an additional $d $ space to store the remainder as it is calculated. Irreducible polynomials are guaranteed to exist, so we must find one eventually.
\end{proof}

Together these results allow us to do division in $GF(q^r)$.
\begin{lemma}
    \label{lemma:mult_inverse}
    Given a sequence of finite fields $F_n = GF(p^{r_n})$ for a constant prime $p$, it is possible to compute the multiplicative inverse of an element $x \in GF(p^{r_n})$ in additional space $4 r_n + O(\log r_n)$ counting the space needed to store the irreducible polynomial.
\end{lemma}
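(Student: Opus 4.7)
The plan is to compute the multiplicative inverse by brute-force search over all elements of $GF(p^{r_n})$: for each candidate $y$, compute the product $xy$ reduced modulo a fixed irreducible polynomial $f$ of degree $r_n$, and return the unique $y$ for which this product equals $1$. Since $x$ is nonzero and $GF(p^{r_n})$ is a field, exactly one such $y$ exists, so this search is well defined.

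Concretely, the algorithm proceeds in four steps. First, invoke Lemma~\ref{lemma:irred_polys} to find an irreducible polynomial $f(\xi) \in GF(p)[\xi]$ of degree $r_n$, and store it in a dedicated region of $r_n$ bits; the scratch memory used inside Lemma~\ref{lemma:irred_polys} is released once $f$ has been written down. Second, iterate a counter $y$ over all polynomials of degree less than $r_n$ in $GF(p)[\xi]$, using a further $r_n$ bits. Third, for each $y$ compute the polynomial product $P(\xi) = x(\xi)\cdot y(\xi)$ coefficient-by-coefficient into a buffer of length $2r_n$, using Lemma~\ref{lemma:well_endowed_polys}, and then reduce $P(\xi)$ modulo $f(\xi)$ in place on the same buffer by invoking the in-place version of Lemma~\ref{lemma:remainder}. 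Fourth, check whether the reduced polynomial is the constant $1$; if so, output $y$ and halt, otherwise advance $y$ and repeat.

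The space is accounted for as follows: $r_n$ bits for the stored irreducible polynomial $f$, $r_n$ bits for the iteration variable $y$, $2r_n$ bits for the product buffer holding $xy$ before reduction, and $O(\log r_n)$ bits for the counters used inside polynomial multiplication and polynomial division, plus a constant number of bits for the outer loop control and for holding a single coefficient at a time. Summing these contributions yields $4 r_n + O(\log r_n)$ additional space, matching the bound in the statement.

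The main delicate point is making sure that the polynomial multiplication followed by the reduction fits within the claimed $O(\log r_n)$ overhead on top of the $2r_n$-bit product buffer, rather than requiring an extra $2r_n$-bit scratch region. This is exactly what Lemmas~\ref{lemma:well_endowed_polys} and~\ref{lemma:remainder} give us: each coefficient of $xy$ is an $\mathsf{NC}^1$ function of the inputs and can be written out sequentially in space $O(\log r_n)$, and the in-place variant of Lemma~\ref{lemma:remainder} performs the Euclidean reduction directly on the product buffer at cost $\lceil \log r_n \rceil + O(1)$. Composing these two subroutines without a multiplicative blow-up is what makes the total space additive, yielding the claimed $4 r_n + O(\log r_n)$ bound.
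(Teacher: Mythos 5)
Your proposal is correct and takes essentially the same route as the paper: both perform a brute-force search over all candidates $y$, computing the polynomial product $xy$ into a $2r_n$-bit buffer (using that $GF(p)[\xi]$ is well endowed) and reducing it in place modulo the stored irreducible polynomial via Lemma~\ref{lemma:remainder}, with the same $r_n + r_n + 2r_n + O(\log r_n)$ accounting. The only cosmetic difference is that you also fold in the search for the irreducible polynomial via Lemma~\ref{lemma:irred_polys}, whereas the paper treats it as given and only charges its $r_n$ bits of storage; since that search peaks at $3r_n + O(\log r_n)$ and its scratch is reusable, the bound is unaffected.
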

\begin{proof}
    If $x = 0$ then there is no multiplicative inverse.
    Otherwise, try multiplying $x$ by every other possible $y$ and taking the remainder using Lemma \ref{lemma:remainder} in place in space $4 r_n )$ until one finds a $y$ such that $xy = 1$. It takes $r_n $ space to iterate over all possible $y$. For every $x, y$, we use another register to store $xy$. Storing $xy$ needs an additional $2 r_n $ space, since we first need to compute the product as a product of polynomials and only take the remainder later. Computing $xy$ uses an additional space $O(\log r_n)$ since the ring of polynomials over $GF(p)$ is well endowed. Finally, we can use Lemma \ref{lemma:remainder} to take the remainder in place in only $\lceil \log r_n \rceil + O(1)$. 
\end{proof}

We can now finally solve linear systems.
\begin{lemma}
    \label{lemma:linsys}
    Given a sequence of finite fields $F_n = GF(p^{r_n})$ for a constant prime $p$, it is possible to solve a linear system of $t_n$ equations and $t_n$ unknowns in $2t_n r_n + 5 r_n + O(\log^2 r_n + \log t_n)$ space if $t_n = O(|F_n|)$ and we count the space used to store the irreducible polynomial for our representation of $GF(p^{r_n})$.
\end{lemma}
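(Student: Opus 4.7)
The plan is to carry out Gaussian elimination on the augmented system $[A \mid b]$ over $F_n$, with intermediate state packed into at most two vectors of length $t_n$ (costing $2 t_n r_n$ bits) together with a constant number of temporary field elements. The first of these vectors serves as the ``current row'' being used as pivot or being updated, while the second holds the partial solution $x$ that is built up during back-substitution. The $5 r_n$ term in the budget covers one scratch field element plus the $4 r_n$ of internal overhead used by the multiplicative-inverse routine of Lemma~\ref{lemma:mult_inverse}. The $O(\log t_n)$ term is a counter for the outer loops over rows and columns, and the $O(\log^2 r_n)$ term absorbs the nested cost of polynomial arithmetic and remainder taking, invoked through Lemmas~\ref{lemma:well_endowed_polys} and~\ref{lemma:remainder}.

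I would run two passes. In the forward pass, for each column $i$ from $1$ to $t_n$, locate the first row with a nonzero entry in column $i$, invert that entry with Lemma~\ref{lemma:mult_inverse}, and for each remaining row compute the elimination scalar and subtract the appropriate multiple of the pivot row; each such update only ever holds a constant number of field elements at once in the $5 r_n$ scratch region, loading one coefficient at a time. In the back pass, starting from $i = t_n$ and decreasing, I compute $x_i = (b_i - \sum_{j > i} u_{ij} x_j)/u_{ii}$ by maintaining a single running accumulator for the partial sum and invoking Lemma~\ref{lemma:mult_inverse} once per row for the final division, writing each $x_i$ into its slot in the reserved solution vector.

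The main obstacle, and the step that deserves the most care in the full proof, is the space accounting rather than the algorithm itself. I need to verify that nested invocations of the polynomial primitives do not accumulate any overhead beyond $O(\log^2 r_n)$, and that the scratch used by Lemma~\ref{lemma:mult_inverse} can be recycled cleanly between iterations so that only $5 r_n$ is held simultaneously. A second subtlety is the hypothesis $t_n = O(|F_n|)$: it both guarantees that row and column indices fit in $O(\log t_n) = O(r_n)$ bits (so the outer counters really cost only $O(\log t_n)$) and rules out pathological situations in pivot selection. With these pieces in place, the claimed bound of $2 t_n r_n + 5 r_n + O(\log^2 r_n + \log t_n)$ follows by plugging the per-operation costs of Lemmas~\ref{lemma:well_endowed_polys}, \ref{lemma:remainder}, and \ref{lemma:mult_inverse} into the two-pass structure described above.
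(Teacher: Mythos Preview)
Your approach via Gaussian elimination is genuinely different from the paper's, which instead computes each coordinate of the solution by Cramer's rule, evaluating the required determinants over the well-endowed polynomial ring $GF(p)[\xi]$ using the $\mathsf{NC}^2$ determinant algorithm of Borodin--Cook--Pippenger and then reducing modulo the irreducible polynomial. In that scheme the $t_n r_n$-sized object is a single unreduced determinant (a polynomial of degree at most $t_n r_n$), a second such register holds the numerator, and the $5 r_n$ and the inverse routine handle the final division; no row-reduced matrix is ever materialized.

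Your plan, as written, has a real gap. With only one ``current row'' buffer, neither pass of Gaussian elimination can be carried out. In the forward pass, when you go to eliminate column $i$ from row $j$ you need row $j$ \emph{after} it has already been reduced against pivots $1,\dots,i-1$, and you simultaneously need the $i$th pivot row in its own reduced form; but you discarded the reduced row $j$ the moment you moved on to row $j+1$ in the previous stage. Reloading ``one coefficient at a time'' from the input only gives you the original $A$, not the partially reduced matrix. The same problem bites harder in back-substitution: the $u_{ij}$ you write down are entries of the upper-triangular factor, and those are nowhere in your state once the forward pass is over. Recomputing $u_{ij}$ on demand unrolls into needing all earlier reduced rows, which is exactly the $\Theta(t_n^2)$ state you were trying to avoid. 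This is the standard obstruction to putting Gaussian elimination in small space, and it is why the paper goes through determinants: each $u_{ij}$ (or, more directly, each $x_i$) is a ratio of minors of the original $A$, and minors can be evaluated in $\mathsf{NC}^2$, hence in $O(\log^2)$ workspace on top of one or two output registers. If you want to salvage your outline, you would effectively end up re-deriving Cramer's rule.
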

\begin{proof}
    By Proposition 4.2 from \cite{BorodinCookPippenger83} it is possible to compute the determinant over a well endowed ring in $\mathsf{NC}^2$. By Lemma \ref{lemma:well_endowed_polys}, we can perform this computation by treating elements of $F_n$ as polynomials over $GF(p)$ first. By Theorem 4 from \cite{Borodin77} this can be done in space $O(\log^2 \log |F_n|)$. 
    The cost of storing the polynomial representing the remainder is $t_n r_n $ since each entry of $F_n$ uses $r_n $ bits and the determinant is a sum of the product of at most $t_n$ elements.
    Then we can find an irreducible polynomial (or preferably access one that has been precomputed) in space $3r_n + O(\log r_n)$ by Lemma \ref{lemma:irred_polys} and take the remainder in place with using additional space $\lceil \log t_n r_n \rceil$ by Lemma \ref{lemma:remainder}. This allows us to calculate determinants of $t_n \times t_n$ matrices over $F_n$.

    We can then use Cramer's rule to solve our equation. Cramer's rule gives the solution to our system as a fraction of determinants. We use Lemma \ref{lemma:mult_inverse} to compute the multiplicative inverse of the denominator in additional space $3 r_n + O(\log r_n)$. Multiplying the inverse of the denominator with the numerator using their properties as well-endowed rings and writing into a (double) register used for the multiplicative inverse in additional space $O(\log r_n)$. If we then take the remainder in place in space $O(\log r_n)$, we can compute the solution to the system in $F_n$.
\end{proof}

\subsection{An Overview of BCH Codes}

The codes used to correct errors in our catalytic tape are so-called BCH codes (Bose–Chaudhuri–Hocquenghem codes) as described by \cite{DodisOstrovskyReyzinSmith06}. A BCH code has the following components.

\begin{enumerate}
\item An alphabet represented by a `small' field $GF(q)$.
\item Codeword length $n = q^m - 1$. Each position of the codeword is represented by a member of $F^*$ where $F^*$ is the multiplicative group of $F = GF(q^m)$ for a fixed value $m$. We may call $F = GF(q^m)$ the larger field.
\item Distance $\delta$.
\end{enumerate}

And we make the following choices.

\begin{enumerate}
\item We set $q = p^{r_n}$ for a prime number $p$ and $r_n$ that depends on the size of the input tape of the machine. Here $p$ is fixed and we set it to $p = 2$ in this work.
\item $m = 1$, therefore the small field equals the large field $F = GF(q)$.
\item $\delta = 2e + 1$. It is well known that one needs a distance of at least $2e + 1$ to be able to correct $e$ errors.
\end{enumerate}

Together these choices form a $[p^{r_n} - 1, p^{r_n} - 1 - \delta, \delta]$-code over an alphabet of size $p^{r_n}$. Ensuring that $p = 2$ means that we can interpret the catalytic tape as a sequence of elements in $GF(q) = GF(2^{r_n})$. Furthermore, we wish to have the property that by extending a word by a small amount we can turn any word into a codeword. We observe that codewords are defined as words that satisfy the following property for $i = 1, \dots, \delta - 1$.

\begin{equation}
    s_i = \sum_{x \in F^*} d_x x^i = 0
\end{equation}

Here the $d_x$ represents the value of the codeword stored at position $x$. Now presume we have a word of length $n$ then we extend the word by adding entries, we call the list of added entries $C \subseteq F^*$. The added values can be set arbitrarily, therefore we obtain the following equations:

\begin{equation}
    \label{eq:encoding_rough}
    s_i = \sum_{x \in F^*} d_x x^i 
    = \sum_{x \in F^* \setminus C} d_x x^i + \sum_{x \in C} d_x x^i
    = s_i' + \sum_{x \in C} d_x x^i 
    = 0
\end{equation}

for

\begin{equation}
\label{eq:si'}
    - s_i' = \sum_{x \in \mathcal{F}^* \setminus C} d_x x^i \,.
\end{equation}

We observe that for every value of $i = 1, \dots, \delta - 1$, we obtain an equation. Each equation is linear in the $d_x$ for $x \in C$. These are the new data points we must calculate in order to turn an arbitrary word into a codeword. Overall this yields the encoding linear system with parameter $\delta - 1$ and $i = 1, \dots, \delta - 1$
\begin{equation}
\label{eq:encoding}
\sum_{x \in C} d_x x^i
=
- s_i'
\end{equation}

In order to argue that a solution to this system always exists, we need the small field to equal the large field of the BCH code. This means $m = 1$. This is necessary because the value $s_i'$ lies in the large field of the code while the values of $d_x$ lie in the small field of the code. If these are the same, we can treat this as a linear algebra problem.

\begin{lemma}
\label{lemma:bch_codes}
By setting $|C| = 2e = \delta - 1$, adding this many members of the alphabet of a BCH code with $m = 1$, it is always possible to turn any string into a codeword. 
\end{lemma}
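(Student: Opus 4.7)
The plan is to show that the linear system (\ref{eq:encoding}) always admits a solution when $|C| = \delta - 1 = 2e$. Since the number of equations equals the number of unknowns, it suffices to prove that the coefficient matrix is invertible over $F = GF(q)$, at which point a unique choice of values $(d_x)_{x \in C}$ exists and completes the given word into a codeword. The hypothesis $m = 1$ is used at the outset: it ensures that the right-hand sides $s_i' \in F$ and the unknowns $d_x \in GF(q)$ lie in the same field, so that (\ref{eq:encoding}) is genuinely a linear system over a single field rather than a mismatched equation between the large field and a strictly smaller alphabet.

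Next, I would identify the coefficient matrix as essentially Vandermonde. Its entry in row $i$ and column $x$ is $x^i$, where $i$ ranges over $\{1, \ldots, \delta - 1\}$ and $x$ over $C \subseteq F^*$. Factoring $x$ out of each column gives the decomposition $A = VD$, where $D$ is the diagonal matrix with diagonal entries $x$ for $x \in C$, and $V$ is the standard Vandermonde matrix with $V_{i,x} = x^{i-1}$.

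From here invertibility is immediate. The determinant $\det D = \prod_{x \in C} x$ is nonzero because $C \subseteq F^*$ contains only nonzero elements, and $\det V$ is a Vandermonde determinant, hence nonzero because the elements of $C$ are pairwise distinct. Thus $\det A \neq 0$, the system has a unique solution, and appending the resulting $|C|$ values to the given word produces a codeword.

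The main obstacle is really just the conceptual step of recognizing the Vandermonde structure hidden inside the BCH defining equations; once spotted, invertibility and uniqueness follow mechanically. It is also worth flagging that this entire argument would collapse without $m = 1$, since then the $s_i'$ would live in a strictly larger field than the $d_x$ and the system would no longer be square over a single field, forcing a more delicate module-theoretic argument in its place.
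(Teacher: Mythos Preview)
Your proposal is correct and follows essentially the same approach as the paper: reduce to the encoding linear system over $GF(q)$ (using $m=1$), recognize the Vandermonde structure of the coefficient matrix, and conclude invertibility. Your factorization $A = VD$ is a slightly more careful version of what the paper does---the paper simply asserts the matrix is Vandermonde and invertible---but the underlying idea is identical.
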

\begin{proof}
    As discussed, it is sufficient to show that Equation \ref{eq:encoding} always has a solution. In order to see this, observe that Equation \ref{eq:encoding} forms a linear system over the field $GF(q)$ and that the matrix of this system is a Vandermonde matrix. Vandermonde matrices are always invertible. Thus a solution to this system always exists. 
\end{proof}
\begin{corollary}
\label{cor:existance_BCH}
Let $S$ be a data string of $n$ bits and $e \leq \frac{1}{2} c /\log(c)$, then there exists a BCH code, with distance $\delta = 2e +1$ and codeword length $n + (2e + 1)\lceil\log(n + e) \rceil$.
\end{corollary}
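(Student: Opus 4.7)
The plan is to derive Corollary~\ref{cor:existance_BCH} as a direct consequence of Lemma~\ref{lemma:bch_codes}, by choosing appropriate BCH parameters and invoking a standard \emph{shortening} of the resulting code. The bulk of the work is bookkeeping: translating between bits and field symbols and verifying that everything fits in the codeword.

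First I would select the symbol size $r_n := \lceil \log(n+e) \rceil$, so that $q := 2^{r_n} \geq n + e$. This gives a BCH code over $GF(q)$ with $q - 1$ symbol positions (each holding $r_n$ bits) and minimum distance $\delta = 2e + 1$, as constructed in the preceding subsection. Next I would designate three disjoint groups of positions inside this code: (i) $\lceil n/r_n \rceil$ positions to carry the $n$ data bits of $S$, grouped into $r_n$-bit symbols with zero-padding of the last block if $r_n \nmid n$; (ii) $|C| = 2e$ positions to play the role of the parity set $C$ from Lemma~\ref{lemma:bch_codes}; and (iii) all remaining positions fixed to $0$. Fixing coordinates of a linear code to $0$ yields a shortened linear code with the same minimum distance, so this shortened code still corrects $e$ symbol errors, and in particular $e$ bit errors.

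Lemma~\ref{lemma:bch_codes} then applies and gives, for any assignment of data symbols, a unique choice of the $2e$ symbols in $C$ that turns the whole assignment into a valid BCH codeword. The bit length of the resulting object is $n$ data bits, plus at most $r_n - 1$ bits of zero-padding used to complete the last data symbol, plus $2e$ parity symbols of $r_n$ bits each; altogether at most $n + (2e+1) r_n = n + (2e+1)\lceil \log(n+e) \rceil$ bits, exactly as claimed.

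The one genuine verification is that the three groups of positions actually fit inside the BCH code, i.e.\ $\lceil n/r_n \rceil + 2e \leq q - 1$. This is where the hypothesis $e \leq \tfrac{1}{2} c/\log c$ enters: since $q - 1 \geq n + e - 1$ and $\lceil n/r_n \rceil \leq n/r_n + 1 \approx n/\log n$, the inequality reduces to a straightforward estimate on $e$ relative to $n$ and $\log n$. I do not expect this step, or any other, to be a real obstacle; the essential content of the claim is carried by Lemma~\ref{lemma:bch_codes}, and the corollary is little more than a counting exercise packaging that lemma in the form used by Lemma~\ref{lem:bch}.
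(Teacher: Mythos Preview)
Your proposal is correct and follows essentially the same route as the paper: choose $r_n=\lceil\log(n+e)\rceil$, chunk $S$ into $r_n$-bit symbols with zero-padding, append $2e$ parity symbols via Lemma~\ref{lemma:bch_codes}, and tally the bit count. Your explicit mention of shortening (zeroing unused coordinate positions) and the capacity check $\lceil n/r_n\rceil + 2e \le q-1$ are points the paper leaves implicit (the latter surfaces only in the remark following the corollary), but they do not constitute a different argument.
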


\begin{proof}
We set $r_n = \lceil \log(n + e) \rceil$ therefore $q = 2^{\lceil \log(n + e) \rceil}$, therefore the alphabet of is of size $ 2^{\lceil \log(n + e) \rceil}$. We break the initial catalytic tape into blocks of length $\lceil \log(n + e) \rceil$, these blocks form the initial letters of the word. If $\lceil \log(n + e) \rceil \nmid n$, we pad the last block of the catalytic tape with additional $0$'s of free space, this requires at most $\lceil \log(n + e) \rceil - 1$ bits of free space. This gives a word consisting of $\lceil \frac{n}{(\lceil\log(n + e)\rceil)}\rceil$ letters. Now we use Lemma~\ref{lemma:bch_codes} and add $2e$ letters of size $\lceil \log(n + e) \rceil$, using $2e \lceil \log(n + e) \rceil$ of free space, such that these new members abide by Equation~\ref{eq:encoding}. This creates a codeword of length $n + (2e + 1)\lceil\log(n + e) \rceil$ as required. 
\end{proof}
\begin{remark}
Note that $q$ in this lemma is taken larger than strictly necessary. There are two requirements on $q$, namely $q$ is the total number of letters we can use to construct a code-word, and that $\log(q)$ the number of bits of which one codeword exists. In this work we set $r_n$ to the number of bits required to represent one letter of the word, which gives the following equation for $r_n$:
\begin{align*}
2^{r_n} \geq 2e + \frac{n}{r_n},
\end{align*}
which our choice of $r_n$ satisfies, but might not always be optimal.
\end{remark}
Given that this code exists and has the correct space complexity we will show that it can be space efficiently computed. Even before doing encoding and decoding, it is required to do an initialization step:
\begin{algorithm}
\caption{Initialization}
\label{alg:initialization}
\begin{algorithmic}[1]
    \State \textbf{Input}: $r \in \mathbb{N}$
    \State Compute an irreducible polynomial of degree $2$ in $GF(2)[\xi]$ via the procedure described in the proof of \ref{lemma:irred_polys}
    \State Pick and save an element that is not 0 and not 1 in $GF(2^{r})$. We can always pick this to be the polynomial $\xi$. \\
    \Return An irreducible polynomial of degree $2^r$ and a generator of the multiplicative group of $GF(2^{r})$.
\end{algorithmic}
\end{algorithm}

We now argue the initialization can be done space efficiently.

\begin{lemma}
    \label{lemma:initialization}
Given a sequence of fields $F_n = GF(2{r_n})$, Algorithm \ref{alg:initialization} can be performed in space $3r_n + O(\log r_n)$.
\end{lemma}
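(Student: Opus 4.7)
The plan is to observe that Algorithm~\ref{alg:initialization} consists of essentially two tasks, and to bound the space of each by appealing to the earlier lemmas. The dominating cost is Step~2 of the algorithm, which invokes the construction in Lemma~\ref{lemma:irred_polys} applied to the prime $p=2$ and degree parameter $r_n$. That lemma already tells us an irreducible polynomial of degree $r_n$ in $GF(2)[\xi]$ can be produced in space $3r_n + O(\log r_n)$, so I would simply cite it and note that the output polynomial is stored on a designated $r_n$-bit portion of the work tape that is reused as the ``irreducible polynomial register'' of all later operations.

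Next I would handle Step~3 of the algorithm, which is trivial: the chosen generator-candidate is the fixed element $\xi \in GF(2^{r_n})$, whose representation as a polynomial of degree $1$ takes at most $r_n$ bits to write down (in fact just a single set bit in the appropriate coefficient slot). Storing $\xi$ therefore costs only $O(r_n)$ additional space, which is absorbed into the $3 r_n + O(\log r_n)$ bound. There is nothing to verify about $\xi$ since the algorithm only asks for an element different from $0$ and $1$, and $\xi$ is manifestly neither.

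Putting the two pieces together, the total workspace used by Algorithm~\ref{alg:initialization} is at most the workspace of Lemma~\ref{lemma:irred_polys} plus the $O(r_n)$ bits used to write $\xi$, yielding the claimed $3r_n + O(\log r_n)$ bound. I would finish by remarking that these output registers---the irreducible polynomial and the generator $\xi$---are meant to be kept on the tape and re-read by the subsequent arithmetic subroutines (Lemmas~\ref{lemma:remainder}, \ref{lemma:mult_inverse}, \ref{lemma:linsys}), so the space they occupy is already counted in those lemmas' statements and is not double-charged.

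I do not anticipate any real obstacle here: the statement is essentially a bookkeeping corollary of Lemma~\ref{lemma:irred_polys}. The only subtle point worth flagging is the apparent typo ``degree $2$'' in Algorithm~\ref{alg:initialization} (which must read ``degree $r_n$'' for the whole framework to make sense, and which matches the degree used when invoking Lemma~\ref{lemma:irred_polys}); I would silently correct this in the proof by referring to degree $r_n$ throughout.
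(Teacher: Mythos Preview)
Your proposal is correct and follows the same line as the paper's proof: invoke Lemma~\ref{lemma:irred_polys} for the irreducible-polynomial step and then observe that writing down the fixed element $\xi$ costs essentially nothing. One small slip to tighten: an additive $O(r_n)$ term is not literally absorbed into $3r_n + O(\log r_n)$; the paper (and your own parenthetical) resolve this by charging the generator step as $O(1)$, since $\xi$ is a fixed hardcoded choice rather than something occupying a full $r_n$-bit register.
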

\begin{proof}
    We review each step of Algorithm \ref{alg:initialization} and review their space cost:
    \begin{enumerate}
        \item For step 1, use \ref{lemma:irred_polys} to find an irreducible polynomial in space $3 r_n + O(\log r_n)$. Only $r_n $ is needed to store the result.
        \item For step 2, we can pick and save the element of $GF(2)[\xi]$ corresponding to $\xi$. This uses $O(1)$ space if always done the same. This does not work for $r_n = 1$, but we can assume always $r_n > 2$.
    \end{enumerate}
\end{proof}
Encoding requires solving the linear equations given by Equation~\ref{eq:encoding}, finding the values $d_x$ for $x \in C$, the additional blocks that were appended. Solving these linear equations requires first calculating  the quantities $s_i'$, given by Equation~\ref{eq:si'}. We use the following algorithm to calculate a specific value $s_i$. By stopping prematurely, we can compute $s_i'$.

\begin{algorithm}
\caption{ComputeChecks}
\label{alg:compute_checks}
\begin{algorithmic}[1]
\State \textbf{Input}: Integer $i$ for $0 < i < \delta$, $0 \leq t < q$
\State Open five registers to store elements of $GF(2^{r_n})$ labelled $G, I, P, M, S, E$ for Generator, Index, Power, Multiplication, Sum and End. 
\State Set all registers except $G$ to 0.
\State Open two registers to store elements of $\{0, 1, \dots, \delta - 1\}$ called $C$ and one to store $i$ that never changes.
\State Assume $G$ stores a generator of the multiplicative group of $GF(p^r)$.
\State $E \leftarrow G^t$ via iterated multiplication. Use register $M$ as a counter.
\State $I \leftarrow G$
\State $P \leftarrow I^i$ via iterated multiplication. Use register $C$ as a counter in this process.
\State  $M \leftarrow P * d_x$.
\State $S \leftarrow S + M$
\State $P, M \leftarrow 0$.
\State $I \leftarrow I * G$
\State Return to step 8 until $E = G$.\\
\Return The value $s_i$ in register $S$ computed on the word.
\end{algorithmic}
\end{algorithm}
Now we give the space complexity of Algorithm \ref{alg:compute_checks}.
\begin{lemma}
    \label{lemma:compute_checks}
    Given a sequence of positive integers $r_n$, the $s_i$ and $s_i'$ can be computed in space $6 r_n + 2 \lceil \log \delta \rceil + O(\log r_n)$, counting the $r_n$ space used to store the generator of the multiplicative group of $GF(2^{r_n})$ and an irreducible polynomial to represent $GF(2^{r_n})$. We assume that the generator is simple so does not use much space.
\end{lemma}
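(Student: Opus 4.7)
First I would audit Algorithm~\ref{alg:compute_checks}: tally every register it allocates, bound the extra workspace consumed by each field operation, then verify that the register $S$ ends up holding $s_i$ (or a prefix $s_i'$). For the registers, the six labels $G,I,P,M,S,E$ each store one element of $GF(2^{r_n})$ represented as a polynomial of degree less than $r_n$ over $GF(2)$, for $r_n$ bits apiece; the generator lives permanently in $G$, and the degree-$r_n$ irreducible polynomial (supplied by Algorithm~\ref{alg:initialization}) is kept in one of these slots as read-only data, so the total allocation is $6r_n$. The two counter registers over $\{0,\dots,\delta-1\}$---one used for iterated exponentiation when computing $I^i$, and one used by the outer loop to check termination against $E$---contribute $2\lceil \log \delta \rceil$; the input index $i$ is read-only and is not re-counted.

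Next I would bound the transient workspace used by each arithmetic operation. By Lemma~\ref{lemma:well_endowed_polys}, addition and polynomial multiplication over $GF(2)[\xi]$ take only $O(\log r_n)$ scratch, and by Lemma~\ref{lemma:remainder} reduction of a double-length product modulo the stored degree-$r_n$ irreducible polynomial can be done in place using $\lceil \log r_n \rceil + O(1)$ further bits. The crucial fact to verify is that the double-width polynomial produced by each multiplication can live inside a pre-existing pair of the six registers---for instance, forming $I^i$ (step~8) inside the concatenation of $P$ and $M$ while $M$ is dead, and forming $P \cdot d_x$ (step~9) inside the concatenation of $M$ and $P$ since $P$ is not read thereafter---so that no additional $r_n$-sized scratch slot must be opened. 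Accessing the block $d_x$ on the catalytic tape from the current $I = G^j$ is handled by keeping $j$ in one of the $C$ counters and indexing the $j$-th length-$r_n$ block of the tape; this pointer arithmetic fits in $O(\log r_n)$ bits and is absorbed into the additive $O(\log r_n)$ term.

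Correctness is then immediate by induction on the outer iteration count: after the $j$-th pass, $S$ holds $\sum_{k=1}^{j} (G^k)^i \, d_{G^k}$, so taking $t = q-1$ makes the loop range over all of $F^*$ and produces $s_i = \sum_{x \in F^*} d_x x^i$, while any smaller $t$ yields the partial sum $s_i'$ appearing in Equation~\ref{eq:si'} with $C = \{G^{t+1},\dots,G^{q-1}\}$. The hard part I anticipate is precisely the double-width-multiplication bookkeeping above: if $P$ and $M$ cannot in fact be reused as joint scratch (because both are live at the moment the product is formed), then the constant in front of $r_n$ slips from $6$ to $7$, and steps~9--11 must be reordered to free a register before each product is computed. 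Verifying that the written order in Algorithm~\ref{alg:compute_checks} already respects this liveness constraint, and patching it if not, is the one step I would plan to do carefully.
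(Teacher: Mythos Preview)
Your approach is essentially the paper's: their proof is a four-sentence register count (six $r_n$-bit field registers, two $\lceil\log\delta\rceil$-bit counters, $O(\log r_n)$ arithmetic overhead), and you reproduce that tally while supplying the double-width-product liveness analysis that the paper simply hides behind the sentence ``multiplication and addition use overhead $O(\log r_n)$.'' A couple of your register identifications are off---the two small registers in step~4 are $C$ and the read-only copy of $i$ (the outer loop terminates by comparing field elements, not via a $\lceil\log\delta\rceil$ counter), and none of the six named slots $G,I,P,M,S,E$ actually holds the irreducible polynomial---but these slips do not change the final space bound.
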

\begin{proof}
    Every element of $GF(2^{r_n})$ uses space $r_n$. We use six of these. We also use two registers of size $\lceil \log 
    \delta \rceil$. Multiplication and addition use overhead $O(\log r)$. 
\end{proof}

We present the BCH encoding algorithm, and argue that it is space efficient.

\begin{algorithm}
\caption{$Encode_{BCH}$}
\label{alg:encoding}
\begin{algorithmic}[1]
    \State \textbf{Initialization}: We assume that Algorithm \ref{alg:initialization} has been performed in advance.
    \State Compute and store the $s_i'$.
    \State Solve Equation \ref{eq:encoding}.
    \State Store the solution. \\
    \Return (with the added entries, we have a codeword now)
\end{algorithmic}
\end{algorithm}

\begin{lemma}
    \label{lemma:encoding}
    It is possible to encode a word of length $2^{r_n} - \delta$ with an alphabet $F_n = GF(2^{r_n})$ and distance~$\delta$ as a codeword of length $2^{r_n}$ with an additional space overhead of $O(e \log n) = (4e + 6) r_n + O(\log^2 r_n)$, for $r_n = \log(e + c) \leq \log(c) + 1$. This implements the function $Encode_{BCH}$. 
\end{lemma}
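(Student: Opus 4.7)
The plan is to follow Algorithm \ref{alg:encoding} step by step, applying the space bounds from the preceding lemmas and exploiting the fact that transient workspace from one phase can be reused by the next. There are three distinct phases---initialization, syndrome computation, and linear-system solving---plus the write-out of the check symbols.

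First, I would invoke Lemma \ref{lemma:initialization} to construct (or look up from a precomputed table) an irreducible degree-$r_n$ polynomial representing $GF(2^{r_n})$ together with a generator of its multiplicative group, costing $3 r_n + O(\log r_n)$ transient space; only the irreducible polynomial ($r_n$ bits) and the $O(1)$-size generator $\xi$ are retained for subsequent use. Second, I would loop over $i = 1, \ldots, \delta - 1 = 2e$ and compute each syndrome $s_i'$ via Algorithm \ref{alg:compute_checks}. By Lemma \ref{lemma:compute_checks}, each invocation uses at most $6 r_n + O(\log r_n)$ workspace (this bound already accounts for the retained polynomial and generator), and each value is deposited into a dedicated buffer of $r_n$ bits, so after all $2e$ iterations the vector $(s_1', \ldots, s_{2e}')$ occupies $2e r_n$ bits of persistent storage.

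Third, I would invoke Lemma \ref{lemma:linsys} with $t_n = 2e$ to solve the system from Equation \ref{eq:encoding} over $GF(2^{r_n})$, obtaining the desired check values $d_x$. Lemma \ref{lemma:bch_codes} ensures that the Vandermonde coefficient matrix is invertible so that a unique solution exists, and the matrix entries $x^i$ can be produced on the fly from the stored generator rather than materialized. Lemma \ref{lemma:linsys} then bounds this phase by $2 \cdot 2e \cdot r_n + 5 r_n + O(\log^2 r_n + \log e) = 4e r_n + 5 r_n + O(\log^2 r_n)$, where the $\log e$ term is absorbed into $O(\log^2 r_n)$ since $r_n = \lceil \log(c+e) \rceil$ and $e \le c$. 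The resulting $d_x$ are appended directly onto the output codeword and do not contribute to working space.

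The crucial observation for the final accounting is that the $6 r_n$ workspace consumed in phase two is entirely transient: it can be reused wholesale by the linear-system solver in phase three, so the two phases' workspaces contribute only their maximum, not their sum. Combining this maximum with the persistent $2e r_n$ syndrome buffer and the retained polynomial then yields a peak of at most $(4e + 6) r_n + O(\log^2 r_n)$, matching the claim. I expect the main obstacle to be nothing mathematically deep but rather the careful bookkeeping needed to certify that these buffers really are disjoint-in-time, so that the workspace reuse is legitimate and the constants combine as stated.
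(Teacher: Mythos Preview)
Your approach is essentially the same as the paper's: walk through Algorithm~\ref{alg:encoding} step by step and invoke Lemmas~\ref{lemma:initialization}, \ref{lemma:compute_checks}, and \ref{lemma:linsys} in turn. You are in fact more careful than the paper about workspace reuse; the paper simply sums the per-step costs and arrives at $(6e+11)r_n + \lceil\log\delta\rceil + O(\log^2 r_n)$, which is looser than the $(4e+6)r_n$ stated in the lemma (the headline claim is only the asymptotic $O(e\log n)$, which both bounds satisfy).

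One small arithmetic slip in your own accounting: if the ``maximum'' you take is $4e r_n + 5r_n$ (the solver cost from Lemma~\ref{lemma:linsys}) and you then add the persistent $2e r_n$ syndrome buffer and the $r_n$ irreducible polynomial on top, you get $(6e+6)r_n$, not $(4e+6)r_n$. To land on $(4e+6)r_n$ you would need to argue that the syndrome buffer and the irreducible polynomial are already counted inside Lemma~\ref{lemma:linsys}'s $2t_n r_n + 5r_n$ allowance (which is defensible, since that bound explicitly includes the irreducible polynomial and the solver must hold the right-hand side anyway). Either way the bound is $O(e r_n) = O(e\log n)$, so the substance of the lemma is unaffected.
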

\begin{proof}
    We look at the space complexity of every step of the encoding procedure. 
    \begin{enumerate}
        \item Initialization costs $3r_n  + O(\log r_n)$ space by Lemma \ref{lemma:initialization}.
        \item For step 2, use Algorithm \ref{alg:compute_checks}. This means we store $2e r_n $ values and use $6r_n  + 2\lceil\log(\delta)\rceil + O(\log r_n)$ space. 
        \item For step 3, we use Lemma \ref{lemma:linsys} which uses $(4e + 5) r_n + O(\log^2 r_n)$ space.
    \end{enumerate}
    Overall, this adds up to a space cost $(6e + 11) r_n +\lceil \log(\delta) \rceil + O(\log^2 r_n)$.
\end{proof}

That completes encoding. We now move to our analysis of decoding. We review the mathematics of the decoding.

We now describe the theory of the decoding algorithm. Decoding follows the procedure described in \cite{DodisReyzinSmith04,DodisOstrovskyReyzinSmith06} with some simplifications since we prioritize space over time.
First, the syndrome $\mathsf{syn}(p)$ of a message $p$ is computed. The syndrome is defined as the collection of the $s_i$ values defined before. From the syndrome we compute the support of the error, $\mathsf{supp}(p) = \{(x, p_x)_{x : p_x \neq 0}\}$ which is defined as the value of the error $p_x$ together with its position $x$. Then the error can be `subtracted' from the word to give back the original codeword. The error correction method only works if the number of errors is at most $(\delta - 1)/2$ and hence we set $\delta = 2e + 1$. It is important for space efficiency that we store only the support of the error, instead of a full error string which would require too much space. 
The support on the other hand uses exactly $O(e \log c)$ space.

The decoding algorithm is a variation of Berlekamp's BCH decoding algorithm. First, define the following polynomials using $M = \{x \in \mathcal{F}^* | p_x \neq 0\}$
\begin{align}
\sigma(z) &= \prod_{x \in M} (1 - xz)
          &
\omega(z) &= \sigma(z) \sum_{x \in M} \frac{p_x xz}{1 - xz}
\end{align}

which both have degree at most $|M| \leq (\delta - 1)/2$. Here $\sigma(z)$ is known as the error locator polynomial since the multiplicative inverses of its roots are the locations of the errors. Similarly, $\omega(z)$ is known as the evaluator polynomial since it gives the error since $\omega(x^{-1}) = p_x \prod_{y \in M, y \neq x} (1 - yx^{-1})$. Note that since these polynomials have no common zeroes, $\mathrm{gcd}(\sigma(z), \omega(z)) = 1$. 

It turns out that $\sigma(z)$ and $\omega(z)$ are the almost unique solutions to the congruence (with parameter $\delta - 1$)
\begin{equation}
    \label{eq:decode_linsys}
    S(z) \sigma(z) \equiv
    \omega(z) \pmod {z^\delta}
\end{equation}
where $S(z) = \sum_{l=1}^{\delta - 1} r_l z^l$. Suppose that $\sigma'(z),\omega'(z)$ are other solutions to this congruence then
\begin{equation}
    \omega(z) \sigma'(z) \equiv
    \sigma(z) S(z) \sigma'(z) \equiv
    \sigma(z) \omega'(z) \pmod {z^\delta} \,.
\end{equation}
Therefore if we restrict the degree of both $\omega(z)$ and $\sigma(z)$ to be polynomials of degree at most $(\delta - 1) / 2$ then as polynomials it is also true that $\omega(z) \sigma'(z) = \sigma(z) \omega'(z)$ and therefore that $\omega(z) / \sigma(z) = \omega'(z) / \sigma'(z)$. So if we also require that $\omega(z)$ and $\sigma(z)$ are relatively prime and $\sigma(z)$ has constant coefficient 1, then $\omega(z), \sigma(z)$ are unique. We call the linear system over $GF(q)$ from Equation \ref{eq:decode_linsys} the decoding linear system with parameter $\delta$.

After setting the constant term of $\sigma(z)$ to be 1, the above congruence gives a linear system with $\delta$ unknowns and $\delta$ equations with coefficients in the field $GF(q^m)$. We use almost the same procedure as described in the encoding step and making use of Lemma \ref{lemma:linsys} in order to solve this system. If less than $(\delta - 1) / 2$ errors are made, a solution is guaranteed to exist. However, we cannot force $\sigma(z)$ and $\omega(z)$ to be coprime in the linear system and as a result the solution may not be unique. Suppose $\sigma(z)$ and $\omega(z)$ have a common factor $\tau(z)$. Since $\omega(z)$ must have a constant coefficient 1, the constant coefficient of $\tau(z)$ must also be 1. But then $\tau(z)$ must have degree at least 1 and therefore $\sigma(z)$ and $\omega(z)$ both have degree at least 1 too high.
Therefore, if more than one solution to the system exists, some of these solutions will have too high a degree. But we can test whether or not the system has more than one solution by evaluating the determinant of the matrix of that system.
If the determinant is 0, we can repeat our procedure but now with $\delta$ replaced by $\delta - 2$ to get a smaller linear system. Repeating this procedure, either we find that the determinant is always 0, meaning that there are no errors to correct, or eventually that the determinant is nonzero and we can solve for polynomials $\sigma(z), \omega(z)$. 

Having solved for polynomials $\sigma(z)$ and $\omega(z)$ we can iterate over all possible values of $z \in \mathcal{F}^*$ to find all roots to $\sigma(z)$ and then compute their inverses using a similar procedure to that described in the encoding step. The evaluation of this polynomial can be done space efficiently, similar to the evaluation of $s_i$ but much simpler in fact. Afterwards, we can evaluate $\omega(z)$ to compute the errors. This is not necessary when $q = 2$ and the error is guaranteed to be 1. Once these have been computed, storing the support of the error is space efficient and the catalytic tape can be corrected. This completes the decoding step. This procedure is performed by the following algorithm, and we give it space complexity.

\begin{algorithm}
\caption{$Decode_{BCH}$}
\label{alg:decode}
\begin{algorithmic}[1]
    \State \textbf{Initialization}: We assume that Algorithm \ref{alg:initialization} has been performed in advance.
    \State Compute the syndrome using Algorithm \ref{alg:compute_checks}
    \State Compute the determinant $\Delta$ of linear system \ref{eq:decode_linsys} with $j = \delta - 1$. Use the method described in the proof of Lemma \ref{lemma:linsys}.
    \While{$\Delta = 0$ and $j > 0$}
    \State $j \leftarrow j - 2$
    \State Compute the determinant $\Delta$ of linear system \ref{eq:decode_linsys} with parameter $j$.
    \EndWhile
    \If{$\Delta = 0$}
    \State Terminate the algorithm (no errors detected).
    \EndIf
    \State Use Lemma \ref{lemma:linsys} to solve linear system \ref{eq:decode_linsys} with parameter $j$. 
    \For{$i \leftarrow 0; i < j; i \leftarrow i + 1$}
        \State Find the $i$th root, $x_i^{-1}$ of the error locator polynomial according to some ordering.
        \State Compute the quantity $\alpha_{x_i}^{-1} = \left( \prod_{y \in M, y \neq x} (1 - yx^{-1}) \right)^{-1}$
        \State Evaluate the evaluator polynomials and compute the errors by multiplying $\omega(x^{-1})$ by $\alpha_{x_i}^{-1}$.
        \State Correct the corresponding error.
    \EndFor \\
    \Return (up to $e$ errors have now been corrected)
\end{algorithmic}
\end{algorithm}

\begin{lemma}
    \label{lemma:decoding}
    Algorithm~\ref{alg:decode} can be performed with space overhead $O(e \log n) = (6 + 4e) r_n + O(\log^2 r_n)$, for $r_n = \log(e + c) \leq \log(c) + 1$, including the cost of the initialization using Algorithm \ref{alg:initialization}.
\end{lemma}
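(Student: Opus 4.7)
The plan is a step-by-step accounting of Algorithm~\ref{alg:decode}, invoking the already-established space bounds for its building blocks (Lemmas~\ref{lemma:initialization}, \ref{lemma:compute_checks}, \ref{lemma:linsys}, and \ref{lemma:mult_inverse}). The key conceptual move is distinguishing reusable workspace from persistent storage and, where possible, arranging the phases so that persistent blocks share the same physical bits.

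First I would handle initialization via Lemma~\ref{lemma:initialization}, leaving only an irreducible polynomial of $r_n$ bits alive. Then the syndromes $s_1, \ldots, s_{\delta - 1}$ are computed one at a time by Algorithm~\ref{alg:compute_checks}; by Lemma~\ref{lemma:compute_checks} each call uses $O(r_n)$ workspace that is freed afterwards, while the $\delta - 1 = 2e$ results form persistent storage of $2e \cdot r_n$ bits because they appear as the right-hand side of the decoding system~\eqref{eq:decode_linsys}.

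Next I would bound the determinant loop: each iteration invokes Lemma~\ref{lemma:linsys} with $t_n \leq \delta = 2e + 1$, giving workspace $(4e + 5) r_n + O(\log^2 r_n)$ that is reused between iterations. Once a nonzero determinant is found, one final call to Lemma~\ref{lemma:linsys} produces the coefficients of $\sigma(z)$ and $\omega(z)$, which together comprise at most $\delta - 1 = 2e$ field elements; these must persist through the root-finding phase, but by that point the syndromes are no longer needed, so the $2e \cdot r_n$ bits that held the syndromes can be overwritten with $\sigma, \omega$ without increasing the persistent footprint.

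Finally, the root-finding loop iterates $x$ over $\mathcal{F}^*$, evaluates $\sigma(x)$ by Horner's rule in $O(1)$ field registers, and for each root computes the inverse via Lemma~\ref{lemma:mult_inverse} and flips the corresponding catalytic bit; since $q = 2$, the error value at each root is necessarily $1$, so no evaluation of $\omega$ is required. Summing persistent storage (the irreducible polynomial plus the $4e \cdot r_n$ after accounting for reuse) with the peak workspace (dominated by the determinant/solving step) yields $(6 + 4e) r_n + O(\log^2 r_n)$ as claimed. The main obstacle is not a deep mathematical step but precisely this accounting: one has to verify that every call to a prior lemma really does free its workspace on return and that the syndrome storage and $(\sigma, \omega)$ storage do not need to coexist; if the two did need to coexist, the constant in front of $e$ would inflate, which would still give the $O(e \log n)$ headline bound but would miss the stated constant.
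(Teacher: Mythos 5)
Your plan follows the same route as the paper's proof---a step-by-step charge of Algorithm~\ref{alg:decode} against Lemmas~\ref{lemma:initialization}, \ref{lemma:compute_checks}, \ref{lemma:linsys} and \ref{lemma:mult_inverse}, with workspace reuse across phases, and your observation that for $q=2$ the error values are forced to be $1$ (so $\omega$ need not be evaluated) matches a remark already made in the paper. The gap is in the final accounting, which as written does not reach the stated constant $(6+4e)r_n$. The binding coexistence is not, as you suggest, between the stored syndromes and the stored $(\sigma,\omega)$; it is between the stored syndromes and the internal workspace of Lemma~\ref{lemma:linsys}. Solving the system of Equation~\ref{eq:decode_linsys} with $t_n \approx 2e$ already costs $2t_n r_n + 5r_n = 4e\,r_n + 5r_n$, and your $2e\,r_n$ bits of syndrome storage must remain alive while that solver runs (they are the coefficients of the system), so your peak is at least $(6e+6)r_n$ rather than $(4e+6)r_n$. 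Your closing sentence---persistent storage ``the irreducible polynomial plus the $4e\,r_n$ after accounting for reuse'' added to a peak workspace ``dominated by the determinant/solving step''---therefore does not sum to the claimed bound unless the $4e\,r_n$ you count is the solver's own $2t_nr_n$ term rather than your syndrome/solution registers, in which case the syndromes are uncounted.

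The paper's proof never charges $2e\,r_n$ for storing the syndrome in the decoding step: the dominant terms are the Lemma~\ref{lemma:linsys} cost $4e\,r_n + 5r_n$ plus $r_n$ for the irreducible polynomial, the implicit point being that any needed syndrome entry can be recomputed on demand by Algorithm~\ref{alg:compute_checks} from the word, which sits unmodified on the catalytic tape until the final correction loop, so the syndromes never need to persist alongside the solver. To repair your version, either adopt this on-demand recomputation, or argue explicitly that the $2t_nr_n$ term of Lemma~\ref{lemma:linsys} already covers input and solution storage. As written, your argument establishes the $O(e\log n)$ headline but not the constant in the lemma statement---a possibility you yourself flagged, but attributed to the wrong pair of coexisting objects.
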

\begin{proof}
    We review the cost of Algorithm~\ref{alg:decode} step-by-step. Steps that use a trivial amount of space are omitted.
    \begin{enumerate}
        \item Initialization costs $3r_n + O(\log r_n)$ space by Lemma \ref{lemma:initialization}.
        \item By Lemma \ref{lemma:compute_checks}, the cost of Algorithm \ref{alg:compute_checks} is $5r_n + 2 \lceil \log \delta \rceil + O(\log s_i)$ not counting the space needed to store the irreducible polynomial. 
        \item Storing and computing a determinant using the method in the proof of Lemma \ref{lemma:linsys} costs $2e r_n + O(\log^2 r_n + \log e)$ space using access to an irreducible polynomial given in the initialization. The counter uses space $O(\log e)$.
        \stepcounter{enumi}
        \stepcounter{enumi}
        \item Reuse space from step 3.
        \stepcounter{enumi}
        \stepcounter{enumi}
        \item Solving the linear system by Lemma \ref{lemma:linsys} uses space $4e r_n + 5r_n + O(\log^2 r_n + \log e)$. We can reuse space used in step 3.
        \stepcounter{enumi}
        \item Evaluating a degree $2e$ polynomial can be done via Horner's method. This uses one sum register, one double sized multiplication output register and one counter register. The multiplication output register has twice the size since before taking the remainder, the full product as a polynomial has to be stored. Since the irreducible polynomial has been precomputed, and we can compute remainders in place, we can evaluate a polynomial in additional space $3r_n + \lceil \log 2e \rceil + O(\log r_n)$. 
            Iterating over all possible solutions uses an additional $r_n $ space. This procedure can recycle the space used in step 11. We use an additional register size $\lceil \log 2e \rceil$ to find the $i$th root. Counting the space used to store the irreducible polynomial means that this costs space $(4e + 5)r_n + O(\log^2 r_n + \log e)$.
        \item We add a (double-sized) multiplication output register for multiplication, a register to maintain the product, and another set of registers to iterate over all possible roots.  Iterating over all roots not equal to $x^{-1}$ allows us to then compute $\alpha_{x_i}$. We then take the multiplicative inverse using Lemma \ref{lemma:mult_inverse}. Overall, this uses space $(4e + 6) + O(\log r_n)$ by reusing registers.
        \item Reusing the space from steps 11 and 12 we can compute the value of the error by multiplying $\omega(x^{-1})$ by $\alpha_{x_i}^{-1}$.
    \end{enumerate}
    This covers all steps of Algorithm \ref{alg:decode} with significant space costs. We ignore $O(\log \delta)$ space terms here, since these are all $O(\log n)$. This adds us to $(6 + 4e) r_n + O(\log^2 r_n)$ space.
\end{proof}

% \input{ecc}

%-------------------------

\end{document}